\newcommand{\Qover}{\mathcal{Q}_\delta}
\newtheorem{theorem}{Theorem}
\newtheorem{definition}{Definition}
\tikzset{
  arrow/.style={thick, -{Latex[length=3mm,width=2mm]}}
}
\tikzset{shifted path/.style args={from #1 to #2 by #3}{insert path={
let \p1=($(#1.east)-(#1.center)$),
\p2=($(#2.east)-(#2.center)$),\p3=($(#1.center)-(#2.center)$),
\n1={veclen(\x1,\y1)},\n2={veclen(\x2,\y2)},\n3={atan2(\y3,\x3)} in
(#1.{\n3+180+asin(#3/\n1)}) to 
(#2.{\n3-asin(#3/\n2)})
}}}
\definecolor{LightBlue}{HTML}{3091DD}
\definecolor{LightGreen}{HTML}{32C5CB}
\pgfplotsset{compat=1.17}
\begin{document}

\title{Deterministic randomness extraction for quantum random number generation with partial trust}

\author{Pablo Tikas Pueyo}
\orcid{0009-0005-2835-0760}
\affiliation{Quside Technologies S.L., C/Esteve Terradas 1, 08860 Castelldefels, Barcelona, Spain}
\affiliation{Universitat de Barcelona, C/de Martí i Franquès, 1, 11, Distrito de Les Corts, 08028 Barcelona, Spain}

\author{Tomás Fernández Martos}
\email{tfernandez@quside.com}
\orcid{0009-0006-8697-3541}
\affiliation{Quside Technologies S.L., C/Esteve Terradas 1, 08860 Castelldefels, Barcelona, Spain}
\affiliation{ICFO-Institut de Ciències Fotòniques, The Barcelona Institute of Science and Technology, 08860 Castelldefels, Barcelona,
Spain}

\author{Gabriel Senno}
\orcid{0000-0002-4788-1664}
\affiliation{Quside Technologies S.L., C/Esteve Terradas 1, 08860 Castelldefels, Barcelona, Spain}
\affiliation{ICFO-Institut de Ciències Fotòniques, The Barcelona Institute of Science and Technology, 08860 Castelldefels, Barcelona,
Spain}

\maketitle

\begin{abstract}
It is a well-known fact in classical information theory that no deterministic procedure can extract close-to-ideal randomness from an arbitrary entropy source. On the other hand, if additional knowledge about the source is available---e.g., that it is a sequence of independent Bernoulli trials---then deterministic extractors do exist. For quantum entropy sources, where in addition to classical random variables we consider quantum side information, the use of extra knowledge about their structure was pioneered in a recent publication [C. Foreman and L. Masanes, Quantum 9, 1654 (2025)]. In that work, the authors provide deterministic extractors for device-independent randomness generation with memoryless devices achieving a sufficiently high CHSH score. In this work, we port their construction to the prepare-and-measure scenario. Specifically, we prove that the considered functions are also extractors for memoryless devices in settings with partial trust, either in the state preparation or in the measurement, as well as in a semi-device-independent setting under an overlap assumption on the prepared quantum states. Within this last setting, we simulate the resulting randomness generation protocol on a novel and experimentally relevant family of behaviors, observing positive key rates already for $7\times 10^3$ rounds.
\end{abstract}

\section{Introduction}

Randomness has become an indispensable asset in the digital era, underpinning a wide range of applications: from computer simulations and forecasting to artificial intelligence and, most critically, cryptography~\cite{hayes2001computing}. Recent technological advances have enabled the emergence of commercially available quantum random number generators (QRNGs), exploiting the inherent unpredictability of quantum mechanics~\cite{herrero2017quantum}.

The general structure of a QRNG consists of two main components: an \textit{entropy source} and a \textit{randomness extractor}. The entropy source is in charge of producing \emph{raw} bit strings with some amount of private randomness, usually quantified via the (smooth) conditional min-entropy~\cite{konig2009operational} (see also~\cite{dupuis2023privacy}). The randomness extractor then transforms the raw bit string into another, usually shorter, bit string that is $\epsilon$-close (in trace distance) to being perfectly random, i.e., uniformly random and independent of a potential eavesdropper's side information.

It is straightforward to see that no deterministic procedure can extract randomness from the class of all entropy sources with a given min-entropy lower bound. In a celebrated result, Santha and Vazirani showed that this impossibility continues to hold even for a restricted, yet natural, class of sources with arbitrary high min-entropy~\cite{715945}. Faced with this limitation, two main approaches were developed in the literature. In \textit{seeded extraction}, a short string of perfectly random bits is used as a catalyst to extract randomness from weak sources. In \textit{multi-source extraction}, additional weak sources are combined to extract randomness under the assumption that each contains sufficient conditional entropy, and some form of independence between them and the side-information also holds (see~\cite{arnonfriedman_et_al:LIPIcs.TQC.2016.2} and, more recently,~\cite{sandfuchs2025randomness}).

Deterministic extractors, nevertheless, do exist for particular classes of entropy sources. For example, biased yet independent coins can be XORed to produce a close to uniform bit (see, e.g.,~\cite{vadhan2012pseudorandomness}). Naturally, this fact led to the question of whether one could construct deterministic extractors tailored for particular families of  quantum entropy sources. In a remarkable result~\cite{Foreman_2025}, the first quantum random number generation protocol with deterministic extraction was constructed. Focusing on a device-independent (DI) setting, the authors prove that the violation of a Bell inequality can replace a min-entropy lower bound as the extractor's promise.

In this work, we show that the techniques developed in~\cite{Foreman_2025} for the DI setting can be adapted to multiple prepare-and-measure settings with \emph{partial trust}, either in the state generation or in the measurement. We also extend the results to the semi-DI framework based on an overlap assumption on the prepared states developed in~\cite{Brask_2017}. Our main result shows that the role played by a Bell functional in the DI construction can be played by a solution to the dual of the guessing probability in any of these partial trust settings. Focusing on the semi-DI setting, our results together with the techniques from~\cite{Foreman_2025} lead to a spot-checking randomness generation protocol whose performance we simulate with a quantum entropy source based on an experimentally relevant family of behaviors~\cite{mir2025measurement}. In the finite-size regime, we observe that positive key rates can be achieved for as low as $7\times 10^3$ rounds. Whereas in the asymptotic limit of infinitely many rounds, we show that even if the fraction of estimating rounds is vanishing, successful extraction can be achieved from arbitrarily close to deterministic behaviors.

The paper is structured as follows. In Sec.~\ref{sec:assumptions}, we describe the different prepare-and-measure settings we work with. In Sec.~\ref{sec:main-result}, we state our main theorem: a family of operator inequalities resulting from dual solutions to the guessing probability. In Sec.~\ref{sec:extractors}, we use this result to prove that the functions considered in~\cite{Foreman_2025} are also extractors in our setting. In Sec.~\ref{sec:protocol}, we provide a randomness generation protocol focused on the semi-DI setting, based on the one from~\cite{Foreman_2025}, whose expected rates under honest but noisy implementations we simulate in Sec.~\ref{sec:numerics}. Finally, we conclude with some remarks in Sec.~\ref{sec:conclusions}.

\section{Settings and assumptions}\label{sec:assumptions}

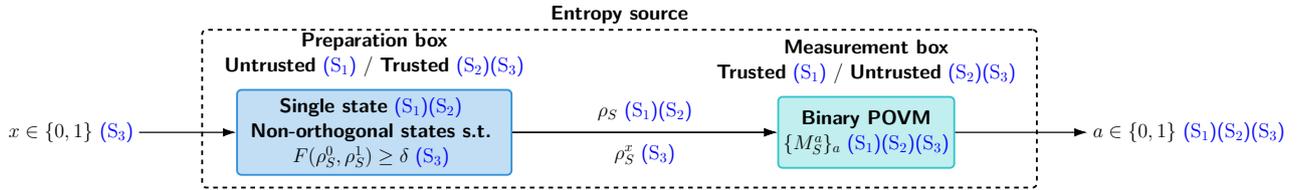
\begin{figure*}
\centering
\resizebox{\textwidth}{!}{
\begin{tikzpicture}[font=\Large\sffamily, node distance=1.5cm and 2.5cm, every node/.style={scale=0.9}]
  \tikzstyle{prepbox} = [rectangle, rounded corners = 1mm, draw=LightBlue, very thick, fill=LightBlue!30, minimum width=3.2cm, minimum height=1.8cm, align=center]
  \tikzstyle{measbox} = [rectangle, rounded corners = 1mm, draw=LightGreen, very thick, fill=LightGreen!30, minimum width=3.2cm, minimum height=1.8cm, text=black, align=center]
  \tikzstyle{outerbox} = [draw, rounded corners = 1mm, dashed, very thick,minimum width=21cm, minimum height=4.0cm, rectangle]
  \tikzstyle{labelnode} = [font=\Large\sffamily\bfseries]
  \tikzstyle{arrow} = [thick, -{Latex[length=3mm,width=2mm]}]

  \node[prepbox] (prep) {\Large{\begin{tabular}{c} \textbf{Single state}~\ref{set:untrusted_prep}\ref{set:untrusted_meas} \\ \textbf{Non-orthogonal states s.t.} \\ \Large$F(\rho_S^0,\rho_S^1) \geq \delta$~\ref{set:SDI} \end{tabular}
  }};
  \node[measbox, right=6.0cm of prep] (meas) {\Large \textbf{Binary POVM} \\\Large$\{M^a_S\}_{a}$~\ref{set:untrusted_prep}\ref{set:untrusted_meas}\ref{set:SDI} };

  \node[left=2.2cm of prep] (x) {\Large $x \in \{0,1\}$~\ref{set:SDI}};
  \node[right=3.0cm of meas] (a) {\Large $a \in \{0,1\}$~\ref{set:untrusted_prep}\ref{set:untrusted_meas}\ref{set:SDI}};

  \draw[arrow] (x) -- (prep);
  \draw[arrow] (prep) -- (meas) node[midway, above, yshift=0.15cm] {\Large $\rho_S$~\ref{set:untrusted_prep}\ref{set:untrusted_meas}};
  \draw[arrow] (prep) -- (meas) node[midway, below, yshift=-0.15cm] {\Large $\rho_S^x$~\ref{set:SDI}};
  \draw[arrow] (meas) -- (a);

  \node[above=0.1cm of prep] {\Large{\begin{tabular}{c} \textbf{Preparation box} \\ \textbf{Untrusted}~\ref{set:untrusted_prep} / \textbf{Trusted}~\ref{set:untrusted_meas}\ref{set:SDI} \end{tabular}
  }};
  \node[above=0.1cm of meas] {\Large{\begin{tabular}{c} \textbf{Measurement box} \\ \textbf{Trusted}~\ref{set:untrusted_prep} / \textbf{Untrusted}~\ref{set:untrusted_meas}\ref{set:SDI} \end{tabular}
  }};

  \node[outerbox, fit=(prep)(meas), yshift=0.6cm, xshift = 0.6cm, label=above:{\Large \textbf{Entropy source}}] {};

\end{tikzpicture}%
}
\caption{\textbf{Setup scheme.}~Schematic representation of the partial trust model considered for the entropy source featuring Sett.~\ref{set:untrusted_prep}-\ref{set:SDI}, including trust in different parts of the device, as well as different possible levels of characterization.}
\label{fig:EntropySource-Model}
\end{figure*}

We consider a \textit{prepare-and-measure} (PM) scenario with an honest user (Alice) holding a device $\mathcal{D}$ which, depending on the setting, features:
\begin{enumerate}[label=($\textrm{S}_{\arabic*}$)]
    \item \textbf{Untrusted-preparation}. A (characterized or not) untrusted preparation $\rho_S$ and a trusted and characterized binary-outcome measurement described by a \textit{positive operator-valued measure} (POVM) $\{M^a_S\}_{a\in\{0,1\}}$.\label{set:untrusted_prep}
    
    \item \textbf{Untrusted-measurement}. A trusted and characterized preparation $\rho_S$ and an untrusted but characterized\footnote{The case of an untrusted and uncharacterized binary-outcome measurement is not included because the only result is the trivial one, where the guessing probability for an eavesdropper is always equal to 1.} binary-outcome measurement described by a POVM $\{M^a_S\}_{a\in\{0,1\}}$.\label{set:untrusted_meas}
    
    \item \textbf{Semi-DI~\cite{Brask_2017, Flatt_2022}}. A pair of trusted preparations $\{\rho_S^x\}_{x\in\{0,1\}}$ with a known lower bound on their fidelity $$F(\rho_S^0,\rho_S^1) := \Tr\left[\sqrt{\sqrt{\rho_S^0}\rho_S^1\sqrt{\rho_S^0}}\right]^2\geq \delta$$ and a (characterized or not) untrusted binary-outcome measurement described by a POVM $\{M^a_S\}_{a\in\{0,1\}}$.\label{set:SDI}
\end{enumerate}
When we say that the preparations or the measurements are \textit{characterized}, we mean that an explicit classical description of the corresponding quantum states or POVMs is assumed. 

An eavesdropper, Eve, sitting outside Alice's secure location holds a quantum system $E$ potentially correlated with her device. We say that part of Alice's device is \textit{untrusted} when it is allowed to be correlated with Eve's system, i.e., when Eve is assumed to hold quantum side information about it. In Setts.~\ref{set:untrusted_meas} and~\ref{set:SDI} we trust the preparations, while in Sett.~\ref{set:untrusted_prep} it is the measurement that we assume to be trusted.

Apart from the characterization and trust considerations, we also include the following \textit{single-round} assumptions, which apply to all settings unless explicitly stated:
\begin{enumerate}[label=($\textrm{A}_{\arabic*}$)]
    \item \textbf{Uncorrelated PM boxes.} There are no correlations between the preparations and the measurement device.\label{ass:uncorr-prep-and-meas}
        
    \item \textbf{No leakage.} While evident from the context, it is assumed that Eve does not have direct access to the measurement outcomes.\label{ass:no-leakage}

    \item \textbf{Input randomness}~\ref{set:SDI}\textbf{.} The inputs $x\in\{0,1\}$ to the preparation box can be chosen uniformly at random and independently of the device $\mathcal{D}$ and of Eve's side information.\label{ass:input-randomness}
\end{enumerate}

The setup, together with the assumptions we make tailored for each setting, is schematized in Fig.~\ref{fig:EntropySource-Model}. We will denote with $\mathcal{Q}_\mathrm{M}$, $\mathcal{Q}_\mathrm{P}$, and $\Qover$ the sets of \emph{behaviors} $\mathbf{p}$ achievable within Setts.~\ref{set:untrusted_prep},~\ref{set:untrusted_meas}, and~\ref{set:SDI}, respectively. That is, the sets of probability distributions $\{p(a)\}_a$ such that there exist $\rho_S$ and $\{M^a_S\}_a$ with $p(a) = \Tr[M^a_S\rho_S]$ for Setts.~\ref{set:untrusted_prep} and~\ref{set:untrusted_meas}, or the set of conditional distributions $\{p(a|x)\}_{a,x}$ such that there exist $\{\rho_S^x\}_x$ and $\{M^a_S\}_a$ with $p(a|x)=\Tr[M^a_S\rho_S^x]$ and $F(\rho_S^0,\rho_S^1)\geq \delta$ for Sett.~\ref{set:SDI}.

In order to be able to refer to the correlations between the outcomes of Alice's device $\mathcal{D}$ and Eve's \emph{quantum} side-information $E$ for the untrusted measurement Setts.~\ref{set:untrusted_meas} and~\ref{set:SDI}, we introduce the \emph{generalized Naimark dilation} picture of the measurement device~\cite{frauchiger2013true}. Namely, we use the fact that a POVM $\{M^a_S\}_a$ can always be seen as projective measurement $\{\Pi_{SM}^a\}_a$ on the original system $S$ and an ancillary system $M$. When the measurement is untrusted, the correlations with Eve are modelled by letting $M$ be in some mixed state $\sigma_M$ and giving Eve its purification~\cite{frauchiger2013true,Senno_2023}. In this picture, the above assumptions thus imply that the global state of systems $S$, $M$ and $E$  is of the form 
\begin{align*}
    \rho_{SME}&=\dyad{\Psi}_{SE}\otimes \dyad{0}_M \text{ \ref{set:untrusted_prep}},\\
    \rho_{SME}&=\rho_{S}\otimes \dyad{\Phi}_{ME} \text{ \ref{set:untrusted_meas}},\\
    \rho_{SME}^x&=\rho_{S}^{x}\otimes \dyad{\Phi}_{ME} \text{ on input $x$~\ref{set:SDI}}.
\end{align*}

After $n_r$ uses of Alice's device $\mathcal{D}$, a sequence $\mathbf{a}\in \{0,1\}^{n_r}$ of \emph{raw} random bits is produced, with corresponding inputs $\mathbf{x}\in \{0,1\}^{n_r}$ in the case of Sett.~\ref{set:SDI}. We will make the following additional assumptions regarding these \emph{sequential} uses of $\mathcal{D}$:

\begin{enumerate}[resume*]
    \item \textbf{Memoryless measurements.} Each measurement acts only on the quantum system of the current round, acting trivially on all others. Formally, the measurement operators across $n_r$ rounds are of the form 
    \begin{equation*}
        \Pi_{S_rM_r}^{\mathbf{a}}=\bigotimes_{i=1}^{n_r}\Pi_{S_iM_i}^{a_i}.
    \end{equation*}
    This memoryless assumption comes from the construction in~\cite{Foreman_2025}, on which our results are based. In Setts.~\ref{set:untrusted_meas} and~\ref{set:SDI}, the $n_r$-round state $\sigma_{M_r}$ can be arbitrarily correlated across rounds.\label{ass:memoryless}

    \item \textbf{Independent preparations}~\ref{set:untrusted_meas}\ref{set:SDI}\textbf{.} The state prepared in round $i$ is independent of the states prepared in previous rounds $j<i$. Formally, the $n_r$-rounds preparations are of the form
    \begin{equation*}
        \rho_{S_r} = \bigotimes_{i=1}^{n_r} \rho_{S_i}\text{~\ref{set:untrusted_meas}}, \qquad \rho_{S_r}^\mathbf{x} = \bigotimes_{i=1}^{n_r} \rho_{S_i}^{x_i}\text{~\ref{set:SDI}}.
    \end{equation*}\label{ass:indep-prep}
\end{enumerate}

The correlations between the raw random bits $\mathbf{a}$ and Eve's side-information $E$ are described by the classical-quantum (cq) state
\begin{align}
    &\qquad\qquad\rho_{\mathbf{A}E}=\sum_{\mathbf{a}\in\{0,1\}^{n_r}}\dyad{\mathbf{a}}\otimes \tilde{\rho}_{E}^{\mathbf{a}}, &&\nonumber\\
    &\text{with}&&\nonumber\\
    &\qquad\tilde{\rho}_{E}^\mathbf{a}=\Tr_{S_rM_r}[(\Pi_{S_rM_r}^\mathbf{a}\otimes\mathbbm{1}_E)\rho_{S_rM_rE}],&&\nonumber\\
    &\text{where}&&\nonumber\\
    &\qquad\rho_{S_rM_rE} =\dyad{\Psi}_{S_rE}\otimes\dyad{0}_{M}^{\otimes n_r}\text{~\ref{set:untrusted_prep},}&&\nonumber\\
    &\qquad\rho_{S_rM_rE} =\rho_{S_r}\otimes\dyad{\Phi}_{M_rE}\text{~\ref{set:untrusted_meas},}&&\nonumber\\
    &\qquad\rho_{S_rM_rE} =\rho_{S_r}^\mathbf{x}\otimes\dyad{\Phi}_{M_rE}\text{~\ref{set:SDI}.}&&\label{eq:cq-state}
\end{align}

\section{Operator inequalities from dual solutions to the guessing probability}\label{sec:main-result}
The seedless extraction protocol from~\cite{Foreman_2025} rests on an inequality from~\cite{masanes2011secure} that bounds the \emph{predictabilty} (see below) of the outcomes in a CHSH experiment by a function of the observed Bell violation. In this section, we prove analogous results in Setts.~\ref{set:untrusted_prep}-\ref{set:SDI}. Namely, the role played by the (shifted) CHSH functional in~\cite{Foreman_2025} will be played by a solution to the dual of the semidefinite program (SDP)~\cite{Skrzypczyk_2023} defining Eve's \emph{guessing probability} in each of the considered settings. For the sake of clarity, we focus on Sett.~\ref{set:SDI} for our discussion since it is the most experimentally relevant for the context of our work, and only present the main results for the other settings, with the corresponding derivations included in the appendices.

Let us consider a single use of Alice's entropy source $\mathcal{D}$ within Sett.~\ref{set:SDI}. Under Assumptions~\ref{ass:uncorr-prep-and-meas}-\ref{ass:input-randomness}, the probability that Eve guesses $\mathcal{D}$'s outcome when the input is $x=0$, given some observed behavior $\mathbf{p}$ and a $\delta$-overlap assumption, is defined by the following SDP~\cite{Brask_2017}
\begin{align}\label{eq:classical-pguess-povm}
    & P_{{\rm guess}}(A|E,X=0,\mathbf{p},\delta):=&&\nonumber\\ 
    &\max_{\{p(\lambda)\}_{\lambda\in\{0,1\}},\{M_{a|\lambda}\}_{a,\lambda}}\sum_{a} p(a)\bra{\psi^0}M_{a|a}\ket{\psi^0}, &&\nonumber\\
    &\qquad\quad~\textrm{subject to } &\nonumber\\
    &\qquad\qquad\quad M_{a|\lambda}\succeq0,~\sum_a M_{a|\lambda}=\mathbbm{1}, &\nonumber\\
    &\qquad\qquad\quad\sum_\lambda p(\lambda) \bra{\psi^x}M_{a|\lambda}\ket{\psi^x} = p(a|x),
\end{align}
where, without loss of generality, $\ket{\psi^0}=\ket{0}$ and $\ket{\psi^1}=\sqrt{\delta}\ket{0}+\sqrt{1-\delta}\ket{1}$. Note that in Eq.~\eqref{eq:classical-pguess-povm} we are considering the measurement to be uncharacterized, a specific case of Sett.~\ref{set:SDI} which is less restrictive and upper bounds the scenario with a characterized measurement (in terms of guessing probability). The classical noise variable $\lambda$ represents Eve's side information and, for pure states and noisy measurements, this formulation of $P_{\mathrm{guess}}$ has been shown to be equivalent to the case where Eve holds quantum side information~\cite{Senno_2023}. 

In App.~\ref{app: proof of bound on p_g}, we derive the \emph{dual} of this SDP, given by
\begin{align}\label{eq:dual}
    & P_{{\rm guess}}^*(A|E,X=0,\mathbf{p},\delta):= &&\nonumber\\
    &\min_{\{H_\lambda\}_{\lambda\in\{0,1\}}\subseteq\mathbb{C}^{2\times 2},\{\nu_{a,x}\}_{a,x}\subseteq \mathbb{R}} \sum_{a,x=0}^1 \nu_{a,x}p(a|x), &&\nonumber\\
    &\qquad\quad~\textrm{subject to } &\nonumber\\
    &\qquad\qquad\quad(H_\lambda)^\dagger=H_\lambda,\nonumber\\
    &\qquad\qquad\quad\dyad{\psi^0}\delta_{a,\lambda} - \sum_{x=0}^1 \dyad{\psi^x} \nu_{a,x} &&\nonumber\\
    &\qquad\qquad\qquad\qquad\quad +H_\lambda - \frac{1}{2}\Tr\Big[H_\lambda\Big]\mathbbm{1} \preceq 0,
\end{align}
where $\delta_{a,\lambda}$ is the Kronecker delta. For every behavior $\mathbf{p}$ and every $\delta\in[0,1]$, any feasible solution $\langle \{H_\lambda\},\{\nu_{a,x}\} \rangle $ to Eq.~\eqref{eq:dual} provides an upper bound 
\begin{align}\label{eq:dual-upper-bounds-probs}
P_{{\rm guess}}(A|E,X=0,\mathbf{p},\delta)\leq \sum_{a,x=0}^1 \nu_{a,x}p(a|x)
\end{align}
to the solution of the \emph{primal} in Eq.~\eqref{eq:classical-pguess-povm}. In App.~\ref{app: extra sdps}, we provide the formulations of the primal and dual SDPs for Eve's guessing probability in Setts.~\ref{set:untrusted_prep} and~\ref{set:untrusted_meas}, which lead to bounds that are similar in structure to the bound in Eq.~\eqref{eq:dual-upper-bounds-probs}.

Since $P_{{\rm guess}}(A|E,X=0,\mathbf{p},\delta)$ is always greater than or equal to the largest probability $\max_a{p(a|0)}$ (Eve can always choose to ignore her side-information), from Eq.~\eqref{eq:dual-upper-bounds-probs} it follows that
\begin{align*}
    p(a|0)
    &\leq P_{\mathrm{guess}}(A|E,X=0,\mathbf{p},\delta) \\
    &\leq \sum_{a,x=0}^1 \nu_{a,x} p(a|x).
\end{align*}
Therefore, for every $\mathbf{p}\in\Qover$, we have that 
\begin{equation}\label{eq: predictability bound}
    \Big| p(0|0) - p(1|0) \Big| \leq 2\sum_{a,x=0}^1 \nu_{a,x} p(a|x) - 1.
\end{equation}
Following \cite{Foreman_2025}, we refer to the lhs of Eq.~\eqref{eq: predictability bound} as the outcome's \emph{predictability} for the input $x$. Our main technical contribution, Thm. \ref{thm: ancilla operator inequality} below (and the analogous Thms. for the other settings), states an operator version of the inequality in Eq.~\eqref{eq: predictability bound}. Crucially, however, by switching to the Naimark dilation picture, the operators in this inequality act on the Hilbert space $\mathcal{H}_{M}$ of the ancillary system $M$ within the measurement device, which, in this setting, is precisely the system to which Eve’s system is correlated.
\begin{theorem}\label{thm: ancilla operator inequality}
    For every $\delta\in[0,1]$, if $\langle \{\nu_{a,x}\}_{a,x=0}^1,\{H_\lambda\}_{\lambda=0}^1\rangle$ is a solution to Eq.~\eqref{eq:dual} for some behavior $\mathbf{p}\in\Qover$, then, for any $\{\rho_{S}^x\}_{x=0}^1$ such that $F(\rho_S^0,\rho_S^1)\geq\delta$ and for any $\{\Pi_{SM}^a\}_{a=0}^1$, the following inequality over operators in $\mathcal{H}_M$ holds:
    \begin{equation}\label{eq: ancilla operator inequality}
        \pm \Tr_S\Big[(\rho_{S}^0 \otimes \mathbbm{1}_{M})(\Pi_{SM}^0-\Pi_{SM}^1)\Big] \preceq G_{M}, 
    \end{equation}
    with $G_M \coloneqq 2\sum_{a,x} \nu_{a,x} \Tr_{S}\Big[(\rho_{S}^x \otimes \mathbbm{1}_{M}) \Pi_{SM}^a\Big] - \mathbbm{1}_{M}$.
\end{theorem}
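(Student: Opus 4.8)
The plan is to reduce the operator inequality~\eqref{eq: ancilla operator inequality} to a family of \emph{scalar} inequalities indexed by the unit vectors of $\mathcal{H}_M$, each of which will follow from the predictability bound~\eqref{eq: predictability bound} applied to a cleverly induced behavior, rather than to the original $\mathbf{p}$. To organize the algebra, I would first introduce the operators $R_a^x := \Tr_S\big[(\rho_S^x\otimes\mathbbm{1}_M)\,\Pi_{SM}^a\big]$ on $\mathcal{H}_M$, in terms of which $G_M = 2\sum_{a,x}\nu_{a,x}R_a^x - \mathbbm{1}_M$ and the left-hand side of~\eqref{eq: ancilla operator inequality} reads $R_0^0-R_1^0$. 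Using $\Pi_{SM}^0+\Pi_{SM}^1=\mathbbm{1}_{SM}$ and $\Tr[\rho_S^0]=1$, which give $R_0^0+R_1^0=\mathbbm{1}_M$, a one-line computation shows that the two inequalities encoded by the symbol $\pm$ are respectively equivalent to
\begin{equation*}
R_0^0 \preceq \sum_{a,x}\nu_{a,x}R_a^x \qquad\text{and}\qquad R_1^0 \preceq \sum_{a,x}\nu_{a,x}R_a^x .
\end{equation*}
By the characterization of the Loewner order through quadratic forms, it thus suffices to prove, for each $b\in\{0,1\}$ and every unit vector $\ket{\xi}\in\mathcal{H}_M$, the scalar inequality $\bra{\xi}R_b^0\ket{\xi}\le\bra{\xi}\big(\sum_{a,x}\nu_{a,x}R_a^x\big)\ket{\xi}$.

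The central step is to recognize that each test vector $\ket{\xi}$ induces a genuine element of $\Qover$. Defining $M_a^{(\xi)}:=\Tr_M\big[(\mathbbm{1}_S\otimes\dyad{\xi})\,\Pi_{SM}^a\big]$, one checks that $M_a^{(\xi)}\succeq 0$ and $\sum_a M_a^{(\xi)}=\langle\xi|\xi\rangle\,\mathbbm{1}_S=\mathbbm{1}_S$, so $\{M_a^{(\xi)}\}_a$ is a bona fide POVM on $S$. Consequently the numbers $p_\xi(a|x):=\bra{\xi}R_a^x\ket{\xi}=\Tr_S\big[\rho_S^x M_a^{(\xi)}\big]$ define a behavior realizable with the prescribed states $\{\rho_S^x\}_x$---whose fidelity is $\geq\delta$ by hypothesis---together with the POVM $\{M_a^{(\xi)}\}_a$; hence $p_\xi\in\Qover$.

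It remains to invoke weak duality. The crucial observation is that the constraints of the dual program~\eqref{eq:dual} do not involve the behavior---only the objective does---so the fixed feasible point $\langle\{\nu_{a,x}\},\{H_\lambda\}\rangle$ stays dual-feasible for \emph{every} behavior. Applying the upper bound~\eqref{eq:dual-upper-bounds-probs} to $p_\xi$ and combining it with the trivial estimate $P_{{\rm guess}}(A|E,X=0,p_\xi,\delta)\geq\max_a p_\xi(a|0)\geq p_\xi(b|0)$ yields $p_\xi(b|0)\leq\sum_{a,x}\nu_{a,x}\,p_\xi(a|x)$, i.e.\ exactly $\bra{\xi}R_b^0\ket{\xi}\leq\bra{\xi}\big(\sum_{a,x}\nu_{a,x}R_a^x\big)\ket{\xi}$. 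Since $\ket{\xi}$ was arbitrary, the operator inequality $R_b^0\preceq\sum_{a,x}\nu_{a,x}R_a^x$ holds for both $b$, and hence both signs in~\eqref{eq: ancilla operator inequality} follow.

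I expect the middle step to be the real obstacle, while the rest is routine. The delicate points are: checking that ``post-selecting the ancilla $M$ onto $\ket{\xi}$'' produces a legitimate POVM on $S$, so that the resulting $p_\xi$ genuinely lives in $\Qover$ with the \emph{same} overlap parameter $\delta$; and noticing that the dual feasible region is behavior-independent, which is precisely what allows the \emph{same} coefficients $\nu_{a,x}$ to certify the predictability bound for the newly constructed $p_\xi$ and not merely for the behavior $\mathbf{p}$ that produced them. Everything else---the reduction in the first paragraph and the passage from scalar expectations to the semidefinite order---is elementary.
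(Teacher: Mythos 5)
Your proof is correct and follows essentially the same route as the paper's: both hinge on the observation that the dual constraints in Eq.~\eqref{eq:dual} are behavior-independent, so the fixed feasible point certifies the predictability bound for the behavior induced by any state of the ancilla $M$, and the operator inequality follows by quantifying over all such states. The only differences are presentational---you test against pure states $\dyad{\xi}$ and explicitly verify that the induced $\{M_a^{(\xi)}\}_a$ is a POVM (so $p_\xi\in\Qover$), while the paper tests against arbitrary $\sigma_M$ and leaves that check implicit.
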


The proof of this theorem is provided in App.~\ref{app: proof of ancilla operator inequality}. Furthermore, as we show in App.~\ref{app: extension of ancilla operator inequality}, Thm.~\ref{thm: ancilla operator inequality} can also be formulated in terms of Setts.~\ref{set:untrusted_prep} and~\ref{set:untrusted_meas}. 

In Sett.~\ref{set:untrusted_meas}, the operator inequality acts on the Hilbert space $\mathcal{H}_M$, similarly to Eq.~\eqref{eq: ancilla operator inequality}, and reads
\begin{equation}\label{eq: sett1 ancilla operator inequality}
    \pm \Tr_S\Big[(\rho_{S} \otimes \mathbbm{1}_M)(\Pi_{SM}^0-\Pi_{SM}^1)\Big] \preceq G_{M}, 
\end{equation}
with $G_M \coloneqq 2\sum_{a,k} \nu_{a,k} \Tr_{S}\Big[(\varphi^k_{S} \otimes \mathbbm{1}_{M}) \Pi_{SM}^a\Big] - \mathbbm{1}_{M}$, where $\rho_S$ is a trusted and characterized state preparation, and $\{\varphi_S^k\}_k$ is a tomographically complete set of states whose action unequivocally defines the untrusted measurement $\{M^a_S\}_a$. 

In Sett.~\ref{set:untrusted_prep}, the inequality acts instead on the Hilbert space $\mathcal{H}_S$ to which Eve is correlated since the measurement is now trusted, and it reads
\begin{equation}\label{eq: sett2 ancilla operator inequality}
    \pm \Tr_M\Big[(\mathbbm{1}_S \otimes \dyad{0}_M)(\Pi_{SM}^0-\Pi_{SM}^1)\Big] \preceq G_{S}, 
\end{equation}
with $G_S \coloneqq 2\sum_{b} \nu_{b} \Tr_{M}\Big[(\mathbbm{1}_S \otimes \dyad{0}_{M}) \tilde{\Pi}_{SM}^b\Big] - \mathbbm{1}_{S}$, where $(\dyad{0}_M, \{\Pi_{SM}^a\}_a)$ and $(\dyad{0}_M, \{\tilde{\Pi}_{SM}^b\}_{b\leq k})$ are, respectively, (standard) Naimark dilations of the trusted and characterized measurement $\{M^a_S\}_a$ and of some (any) $k$-outcome informationally complete (IC) POVM on $\mathcal{H}_S$  (whose statistics over $\rho_S$ completely characterize it).

In the following section, we use Thm.~\ref{thm: ancilla operator inequality} to apply the extractors considered in \cite{Foreman_2025} to our prepare-and-measure Setts.~\ref{set:untrusted_prep}-\ref{set:SDI}. In the section after that, focusing specifically on Sett.~\ref{set:SDI}, we derive a spot-checking randomness generation protocol that uses these extractors.

\section{Single- and multi-bit deterministic extractors}\label{sec:extractors}
In a QRNG, the randomness extractor component is in charge of transforming the raw random bits produced by some intrinsically random quantum measurement (the quantum entropy source) into another, usually shorter, string satisfying the universally composable notion of security given by the trace distance \cite{frauchiger2013true,renner2005universally}. In this work, we focus on \emph{deterministic randomness extractors}. 
\begin{definition}
We say that a function $g:\{0,1\}^{n_r}\to\{0,1\}^m$ is a \emph{$(m,\epsilon)$-deterministic randomness extractor} for a cq-state $\rho_{\mathbf{A}E}$ iff the cq-state \begin{equation}\label{eq: actual cq-state}
    \rho_{\mathbf{K}E} = \sum_{\mathbf{k}\in\{0,1\}^m}\sum_{\mathbf{a}\in\{0,1\}^{n_r} : g(\mathbf{a})=\mathbf{k}} \dyad{\mathbf{k}} \otimes \tilde{\rho}_{E}^\mathbf{a},
\end{equation}
is such that
\begin{equation}\label{eq: trace distance}
    \frac{1}{2}\left\Vert \rho_{\mathbf{K}E} - \rho_{\mathbf{K}E}^{\text{ideal}} \right\rVert_1 \leq \epsilon, 
\end{equation}
with $\rho_{\mathbf{K}E}^{\text{ideal}} = \frac{\mathbbm{1}}{2^m} \otimes \rho_E$.
\end{definition}
In what follows, we show that the functions shown to be deterministic extractors for device-independent quantum entropy sources in \cite[Theorems 2 and 4]{Foreman_2025} are also extractors for the outcomes of $n$ sequential uses of the device described in Sec.~\ref{sec:assumptions}.

\paragraph{Single-bit extraction.} The single-bit extractor proposed in~\cite{Foreman_2025} is a well-known function in computer science: the \texttt{XOR} function. Applying it to the measurement outcomes $\mathbf{a}$ yields
\begin{equation}\label{eq:XOR_function}
    \texttt{XOR}(\mathbf{a}) = \bigoplus_{i=1}^{n_r} a_i \in \{0,1\}.
\end{equation}

\begin{theorem}\label{thm: single-bit extraction security proof}
    Let $\rho_{\mathbf{A}E}$ be the cq-state that results from $n_r$ sequential uses of a PM device satisfying assumptions \ref{ass:uncorr-prep-and-meas}-\ref{ass:indep-prep} as described by Eq. \eqref{eq:cq-state}, for a given overlap bound $\delta\in[0,1]$ in the case of Sett.~\ref{set:SDI}. Then, for all $n_r>0$, the function $\texttt{XOR}:\{0,1\}^{n_r}\to\{0,1\}$ is a $(1,\epsilon_n)$-deterministic randomness extractor for $\rho_{\mathbf{A}E}$ with
    \begin{equation*}
        \epsilon_n = \frac{1}{2}\Tr\Bigg[\phi\prod_{i=1}^{n_r}G_{i}\Bigg],
    \end{equation*}
    where $\phi=\rho_{S_r}$ for Sett.~\ref{set:untrusted_prep} and $\phi=\sigma_{M_r}$ for Setts.~\ref{set:untrusted_meas} and \ref{set:SDI}, and $G_{i}$ is the per-round version of the operators $G$ introduced in Thm.~\ref{thm: ancilla operator inequality}, defined in Eqs.~\eqref{eq: ancilla operator inequality}-\eqref{eq: sett2 ancilla operator inequality} for Setts.~\ref{set:SDI}-\ref{set:untrusted_prep}, respectively.
\end{theorem}
The proof of Thm. \ref{thm: single-bit extraction security proof} is given in App.~\ref{app: proof of single-bit extraction security proof}. To gain some intuition about this result, let us focus again on Sett.~\ref{set:SDI} as we did in Sec.~\ref{sec:main-result} and consider a simplification under an \textit{independent and identically distributed} (IID) assumption. In other words, let us consider that $n_r$ sequential uses of Alice's device $\mathcal{D}$ consist on $n_r$ independent uses of the same \textit{single-round} device $\mathcal{D}_{\mathrm{single}}$ with corresponding (fixed, but unknown) states $\{\rho_S^x\}_x$ and measurement dilation $( \sigma_M,\{\Pi_{SM}^a\}_a)$. Formally, this implies that the $n_r$-round state and measurements are
\begin{align*}
    {\rho_{S_rM_r}^\mathbf{x}}=\bigotimes_{i=1}^{n_r}\rho^{x_i}_S\otimes \sigma_M\quad\text{and}\quad \Pi_{S_rM_r}^\mathbf{a}=\bigotimes_{i=1}^{n_r} \Pi_{SM}^a.
\end{align*}
Under this IID assumption, Thm.~\ref{thm: single-bit extraction security proof} yields
\begin{align*}
    &\left\lVert \rho_{\mathbf{K}E} - \rho_{\mathbf{K}E}^{\text{ideal}} \right\rVert_1 \leq \Tr\Bigg[\sigma_{M_r}\prod_{i=1}^{n_r}G_{M_i}\Bigg]&&\nonumber \\
    &\qquad= \prod_{i=1}^{n_r}\Bigg(2\sum_{a,x} \nu_{a,x} \Tr\Big[(\rho_{S}^x\otimes \sigma_{M})\Pi_{SM}^a\Big] - 1\Bigg)&&\nonumber\\
   &\qquad= \left(2\sum_{a,x} \nu_{a,x} p(a|x) - 1\right)^{n_r}. &&\label{eq: single-bit special case}
\end{align*}
with $p(a|x)=\Tr[\Pi_{SM}^a(\rho_S^x\otimes \sigma_M)].$ Notice that as long as the dual objective is strictly less than $1$ (i.e., that the outcomes for input $x$ are not completely predictable), the trace distance goes to zero as $n_r\rightarrow\infty$. Equivalent results can be found within Setts.~\ref{set:untrusted_prep} and~\ref{set:untrusted_meas}.

\paragraph{Multi-bit extraction.} Single-bit protocols fall short for most applications requiring useful, high-quality randomness. The following theorem proves that the family of functions $g:\{0,1\}^{n_r}\to\{0,1\}^m$ considered in \cite{Foreman_2025} are extractors in our setting.

\begin{theorem}\label{thm: multi-bit extraction security proof}
    Let $g:\{0,1\}^{n_r} \rightarrow \{0,1\}^m$ be such that:  
    \begin{equation}\label{eq:property extractors}
    \Bigg|\sum_{\mathbf{a}\in\{0,1\}^{n_r}}\Big(\delta_{\mathbf{k},g(\mathbf{a})}-2^{-m}\Big)(-1)^{\mathbf{a}\cdot\mathbf{r}}\Bigg| \leq n_r^2 \: \sqrt{2}^{n_r - m}.
    \end{equation}
    Let $\rho_{\mathbf{A}E}$ be the cq-state that results from $n_r$ sequential uses of a PM device satisfying assumptions \ref{ass:uncorr-prep-and-meas}-\ref{ass:indep-prep} as described by Eq.~\eqref{eq:cq-state}, for a given overlap bound $\delta\in[0,1]$ in the case of Sett.~\ref{set:SDI}. Then, $g$ is a $(m,\epsilon_n)$-deterministic randomness extractor for $\rho_{\mathbf{A}E}$ with 
    \begin{equation*}
        \epsilon_n = n_r^2 \: \sqrt{2}^{m-n_r-2}\Tr\Bigg[\phi\prod_{i=1}^{n_r}\Big(\mathbbm{1}_{i} + G_{i}\Big)\Bigg].
    \end{equation*}
    where $\phi$ and $G_i$ are the same as in Thm.~\ref{thm: single-bit extraction security proof}.
\end{theorem}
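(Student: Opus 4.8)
The plan is to reduce the trace distance in Eq.~\eqref{eq: trace distance} to a sum of per-round operator inequalities via a Walsh--Fourier expansion of $g$, exactly as in the DI argument of~\cite{Foreman_2025}, with Thm.~\ref{thm: ancilla operator inequality} playing the role that the predictability bound plays there. First, since the flags $\{\ket{\mathbf k}\}$ are orthonormal, the Hermitian operator $\rho_{\mathbf K E}-\rho_{\mathbf K E}^{\text{ideal}}$ is block diagonal, so using $\rho_E=\sum_{\mathbf a}\tilde\rho_E^{\mathbf a}$ one gets $\tfrac12\|\rho_{\mathbf K E}-\rho_{\mathbf K E}^{\text{ideal}}\|_1=\tfrac12\sum_{\mathbf k}\big\|\sum_{\mathbf a}(\delta_{\mathbf k,g(\mathbf a)}-2^{-m})\tilde\rho_E^{\mathbf a}\big\|_1$. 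I would then expand the scalar prefactor over the characters of $\{0,1\}^{n_r}$, $\delta_{\mathbf k,g(\mathbf a)}-2^{-m}=2^{-n_r}\sum_{\mathbf r}\hat c_{\mathbf k}(\mathbf r)(-1)^{\mathbf a\cdot\mathbf r}$, noting that the coefficients $\hat c_{\mathbf k}(\mathbf r)$ are precisely the quantity on the lhs of Eq.~\eqref{eq:property extractors}, so the hypothesis yields $|\hat c_{\mathbf k}(\mathbf r)|\le n_r^2\,\sqrt{2}^{\,n_r-m}$ uniformly in $\mathbf k,\mathbf r$. Substituting and applying the triangle inequality then bounds each block in terms of the \emph{signed} operators $\tilde\rho_E^{(\mathbf r)}:=\sum_{\mathbf a}(-1)^{\mathbf a\cdot\mathbf r}\tilde\rho_E^{\mathbf a}$ (each $\tilde\rho_E^{\mathbf a}\succeq0$, so $\tilde\rho_E^{(\mathbf r)}$ is Hermitian), giving $\big\|\sum_{\mathbf a}(\delta_{\mathbf k,g(\mathbf a)}-2^{-m})\tilde\rho_E^{\mathbf a}\big\|_1\le 2^{-n_r}n_r^2\sqrt{2}^{\,n_r-m}\sum_{\mathbf r}\|\tilde\rho_E^{(\mathbf r)}\|_1$.

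Next I would compute $\tilde\rho_E^{(\mathbf r)}$ using the structural assumptions. Fixing the generation input to $x=0$ and tracing out $S_r$ first, assumptions~\ref{ass:indep-prep}--\ref{ass:memoryless} make both $\rho_{S_r}^{\mathbf 0}$ and $\Pi_{S_rM_r}^{\mathbf a}$ factorize across rounds, so that $\tilde\rho_E^{(\mathbf r)}=\Tr_{M_r}\big[(\bigotimes_{i=1}^{n_r}O_{M_i}^{r_i}\otimes\mathbbm 1_E)\dyad{\Phi}_{M_rE}\big]$, where $O_{M_i}^{0}=\mathbbm 1_{M_i}$ and $O_{M_i}^{1}=\Tr_{S_i}[(\rho_{S_i}^0\otimes\mathbbm 1_{M_i})(\Pi_{S_iM_i}^0-\Pi_{S_iM_i}^1)]$. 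The virtue of the Naimark picture is that these $O_{M_i}^{1}$ are exactly the operators controlled by Thm.~\ref{thm: ancilla operator inequality}: setting $T_{M_i}^0:=\mathbbm 1_{M_i}$ and $T_{M_i}^1:=G_{M_i}$, the theorem gives $\pm O_{M_i}^{r_i}\preceq T_{M_i}^{r_i}$ for both $r_i\in\{0,1\}$ (trivially for $r_i=0$), and in particular $G_{M_i}\succeq0$.

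The crux is then a per-$\mathbf r$ bound $\|\tilde\rho_E^{(\mathbf r)}\|_1\le\Tr\big[\sigma_{M_r}\bigotimes_i T_{M_i}^{r_i}\big]$, which transports the operator norm from Eve's system $E$ onto the ancilla $M_r$ on which $G$ acts. I would prove it by writing, per factor, $O_{M_i}^{r_i}=(T_{M_i}^{r_i})^{1/2}W_{M_i}^{r_i}(T_{M_i}^{r_i})^{1/2}$ with $\|W_{M_i}^{r_i}\|_\infty\le1$ (a direct consequence of $\pm O\preceq T$, with a $T_{M_i}^{r_i}+\eta\mathbbm 1$ regularization and $\eta\to0$ to cover a singular $G_{M_i}$), so that $\bigotimes_i O_{M_i}^{r_i}=T^{1/2}WT^{1/2}$ with $T=\bigotimes_i T_{M_i}^{r_i}$ and $\|W\|_\infty\le1$. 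For any Hermitian $Y_E$ with $\|Y_E\|_\infty\le1$, putting $\ket{\xi}=(T^{1/2}\otimes\mathbbm 1_E)\ket{\Phi}$ and using operator Cauchy--Schwarz gives $|\Tr[\tilde\rho_E^{(\mathbf r)}Y_E]|=|\bra{\xi}(W\otimes Y_E)\ket{\xi}|\le\braket{\xi}{\xi}=\Tr[\sigma_{M_r}T]$, and maximizing over $Y_E$ yields the claim. Summing over $\mathbf r$ collapses the tensor factors, $\sum_{\mathbf r}\Tr[\sigma_{M_r}\bigotimes_i T_{M_i}^{r_i}]=\Tr[\sigma_{M_r}\prod_i(\mathbbm 1_{M_i}+G_{M_i})]$, and combining with the $2^m$ values of $\mathbf k$ and the constant $2^{-n_r}n_r^2\sqrt{2}^{\,n_r-m}$ produces the prefactor $n_r^2\sqrt{2}^{\,m-n_r-2}$ after collecting powers of $\sqrt 2$, which is precisely $\epsilon_n$. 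I expect the main obstacle to be this key lemma: making the factorization respect the tensor structure while handling a non-invertible $G_{M_i}$, and executing the Cauchy--Schwarz step so the bound lands exactly on $\Tr[\sigma_{M_r}T]$ rather than something looser. The remaining ingredients---the block decomposition over $\mathbf k$, the Walsh expansion, and the power-of-two bookkeeping---are routine.
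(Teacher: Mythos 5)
Your proposal is correct, and its skeleton coincides with the paper's: the same Walsh--Fourier expansion over $\mathbf{r}$ (the paper organizes it as an expansion of $\prod_i\Pi_{S_iM_i}^{a_i}$ in powers of $C_{S_iM_i}=\Pi_{S_iM_i}^0-\Pi_{S_iM_i}^1$, which is algebraically the same thing, since $\sum_{\mathbf{a}}(-1)^{\mathbf{a}\cdot\mathbf{r}}\prod_i\Pi_{S_iM_i}^{a_i}=\prod_i(C_{S_iM_i})^{r_i}$), the same use of the triangle inequality together with Eq.~\eqref{eq:property extractors}, and the same power-of-$\sqrt{2}$ bookkeeping leading to $n_r^2\sqrt{2}^{\,m-n_r-2}$. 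The genuine difference is how the per-$\mathbf{r}$ bound $\big\|\tilde\rho_E^{(\mathbf{r})}\big\|_1\leq\Tr\big[\sigma_{M_r}\prod_i(G_{M_i})^{r_i}\big]$ is established. The paper uses the variational characterization of the trace norm (a Hermitian $H_E$ with spectrum $\pm 1$, split into projectors $H_E^{\pm}$) and then invokes Thm.~\ref{thm: ancilla operator inequality} together with Lemma~6 of~\cite{Foreman_2025}, which supplies the tensorization of the single-round operator inequalities. You instead prove that tensorization yourself: from $\pm O_{M_i}\preceq G_{M_i}$ you extract the contraction factorization $O_{M_i}=G_{M_i}^{1/2}W_{M_i}G_{M_i}^{1/2}$ with $\|W_{M_i}\|_\infty\leq 1$ (the $\eta$-regularization handles singular $G_{M_i}$, and is sound in finite dimension by compactness of the set of contractions), multiply the factorizations across tensor factors, and close with an operator-norm (Cauchy--Schwarz) estimate on $\ket{\xi}=(T^{1/2}\otimes\mathbbm{1}_E)\ket{\Phi}$. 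This is a valid, self-contained replacement---it is essentially the standard proof of the cited Lemma~6---so your route buys independence from the external lemma at the cost of a somewhat longer argument, while the paper buys brevity by citation; both land on exactly the same bound $\epsilon_n$.
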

The proof of Thm.~\ref{thm: multi-bit extraction security proof} is provided in App.~\ref{app: proof of multi-bit extraction security proof}. In \cite[Lemma 3]{Foreman_2025}, functions satisfying Eq. \eqref{eq:property extractors} are shown to exist whenever $n_r>5$ and $n_r>m$. The proof, however, uses a probabilistic argument and so no explicit constructions are obtained. If we now examine the IID scenario within Sett.~\ref{set:SDI}, in this case Thm.~\ref{thm: multi-bit extraction security proof} yields
\begin{equation*}
    \left\lVert \rho_{\mathbf{K}E} - \rho_{\mathbf{K}E}^{\text{ideal}}\right\rVert_1 \leq  n_r^2 \sqrt{2}^m \left(\frac{2\sum_{a,x} \nu_{a,x} p(a|x)}{\sqrt{2}}\right)^{n_r}.
\end{equation*}
Note that the convergence to zero with $n_r$ now only occurs when the dual SDP objective is within the interval $[1/2,\sqrt{2}/2)$. We do not consider $\sqrt{2}/2$ to be a fundamental bound, but rather an artifact of the proof technique. Equivalent results can again be found for Setts.~\ref{set:untrusted_prep} and~\ref{set:untrusted_meas}. 

\section{Private randomness generation protocol}\label{sec:protocol}
We now focus on Sett.~\ref{set:SDI} to derive a randomness generation protocol based on the previously defined extractors. The upper bounds to the trace distance in Thms.~\ref{thm: single-bit extraction security proof} and~\ref{thm: multi-bit extraction security proof} depend on the expectation value of an operator over the multi-round state of the measurement device. Since in Sett.~\ref{set:SDI} this state is unknown, we need to estimate the expectation value through the input-output statistics of the device $\mathcal{D}$. For this aim, we implement a \emph{spot-checking} randomness generation protocol, where a random subset of rounds is allocated to estimating the relevant expectation values, while the remaining rounds are used to generate the raw key. In this section, we adapt the spot-checking procedure given in~\cite{Foreman_2025} to our setting.

Under assumptions~\ref{ass:uncorr-prep-and-meas}-\ref{ass:indep-prep}, assuming that we have a trusted classical computer to carry out the statistical analysis and considering that all devices (including the adversary's) operate under the laws of quantum physics, the protocol structure is the following:

\begin{algorithm}[H]
	\floatname{algorithm}{Protocol}
	\raggedright
	\caption{Randomness Generation Protocol}
	\label{protocol:rap}
	\begin{algorithmic}[1]
		\STATEx \textbf{Arguments:} 
		\STATEx\hspace{\algorithmicindent} $\delta\in[0,1]$ -- an overlap lower-bound.
		\STATEx\hspace{\algorithmicindent} $D_\delta$ -- a PM device within Sett.~\ref{set:SDI} satisfying assumptions ~\ref{ass:uncorr-prep-and-meas}-\ref{ass:input-randomness} for the given $\delta$. 
		\STATEx\hspace{\algorithmicindent} $n \in \mathbb{N}_+$ -- number of rounds.
		\STATEx\hspace{\algorithmicindent} $p_e\in(0,1)$ -- the probability for an estimation round.
        \STATEx\hspace{\algorithmicindent} $\epsilon\in(0,1]$ -- the protocol's security parameter.
        \STATEx\hspace{\algorithmicindent} $\mathrm{ExtType}\in{\footnotesize\{\texttt{single-bit},\texttt{multi-bit}\}}$ -- the type of extractor to apply.
	
		\STATEx
	
		\STATEx \textbf{Testing and raw data generation:}
		\STATE For every round $i \in \{1,\dots,n\}$ do:
		\STATE\hspace{\algorithmicindent} Choose $t_i \in {\footnotesize\{\texttt{estimation},\texttt{raw key}\}}$ such that $\Pr[T_i={\footnotesize\texttt{estimation}}]=p_e$.
		\STATE\hspace{\algorithmicindent} If $t_i = \texttt{estimation}$, choose $x_i\in\{0,1\}$ uniformly at random and feed it to $\mathcal{D}_\delta$. Record the input-output tuple $z_i=(a_i, x_i)$.

        \STATE\hspace{\algorithmicindent} Else-If $t_i = \texttt{raw key}$, feed $x_i=0$ to $\mathcal{D}_\delta$ and record the output $a_i$.

        \STATEx

        \STATEx \textbf{Randomness extraction:} 
        
        \STATE\hspace{\algorithmicindent} Let $\mathbf{r}=(a_i)_{i\in\{i|t_i=\texttt{raw key}\}}$ be the raw outcomes.
        \STATE\hspace{\algorithmicindent} If $\mathrm{ExtType} = \texttt{single-bit}$, set the protocol's output $k=\texttt{XOR}(\mathbf{r})$ if $m_\mathrm{xor}(\mathbf{t},\mathbf{z})$=1 else output nothing.

        \STATE\hspace{\algorithmicindent} If $\mathrm{ExtType} = \texttt{multi-bit}$, set the protocol's output $\mathbf{k}=g(\mathbf{r})$ for $g:\{0,1\}^{n_r}\to\{0,1\}^m$ a function satisfying Eq. \eqref{eq:property extractors} if $m=m_\mathrm{mul}(\mathbf{t},\mathbf{z})>0$ else output nothing. 
	\end{algorithmic}
\end{algorithm}

The functions $m_\mathrm{xor}$ and $m_\mathrm{mul}$ are defined as follows:

\paragraph{\textrm{XOR} extraction.}
\begin{equation}\label{eq:length single-bit extractor}
m_\mathrm{xor}(\mathbf{t},\mathbf{z}) = \begin{cases}\begin{split}
1 \quad \text{if} &\operatorname*{max} \limits_{\substack{\{\alpha_z\},\ \beta \\ \{\nu_z\},\ \{H_\lambda\}
    }} f_{\mathrm{xor}}(\mathbf{t},\mathbf{z},\{\alpha_{z}\}_{z},\beta,\epsilon) \\ &\geq 0, \end{split}\\
0 \quad \text{otherwise},
\end{cases}
\end{equation}
for 
\begin{align*}
  f_{\mathrm{xor}}(\mathbf{t},\mathbf{z},\{\alpha_{z}\}_{z},\beta,\epsilon) = & \sum_{j=1}^{n_e} \alpha_{z_j} + (\beta - 1)n_r \nonumber\\ & \qquad\qquad\qquad - 2\log(1/\epsilon),
\end{align*}
with $\langle \{\nu_z\},\ \{H_\lambda\}\rangle$ subject to the constraints in Eq. \eqref{eq:dual} and 
\begin{equation}\label{eq:constraints single-bit extractor length}
    \forall z\in\{0,1\}^2,\quad p_r\sqrt{2}^{\beta - 1} (4\nu_{z} - 1) + p_e\sqrt{2}^{\alpha_{z}} = 1,
\end{equation}
where $p_r = 1 - p_e$.\\

\paragraph{Multi-bit extraction.}
\begin{equation}\label{eq:length m-bit extractors}
    m_\mathrm{mul}(\mathbf{t},\mathbf{z}) = \Bigg\lfloor \operatorname*{max} \limits_{\substack{\{\alpha_z\},\ \beta \\ \{\nu_z\},\ \{H_\lambda\}}} f_{\mathrm{mul}}(\mathbf{t},\mathbf{z},\{\alpha_{z}\}_{z},\beta,\epsilon) \Bigg\rfloor,
\end{equation}
for
\begin{align*}
    f_{\mathrm{mul}}(\mathbf{t},\mathbf{z},\{\alpha_{z}\}_{z},\beta,\epsilon)  = & \sum_{j=1}^{n_e} \alpha_{z_j} + \beta n_r \\ & - 2\log(1/\epsilon) - 4\log(n_r),
\end{align*}
with $\langle \{\nu_z\},\ \{H_\lambda\}\rangle$ subject to the constraints in Eq. \eqref{eq:dual} and 
\begin{equation}\label{eq:constraints m-bit extractor length}
    \forall z\in\{0,1\}^2,\quad 4p_r\sqrt{2}^{\beta - 1}\nu_{z} + p_e\sqrt{2}^{\alpha_{z}} = 1.
\end{equation}

Notice that Protocol \ref{protocol:rap} does not include an explicit aborting condition. Instead, when $m(\mathbf{t},\mathbf{z}) = 0$ no output is produced, reflecting estimation data that fails to satisfy the necessary security criteria. Also note that the variables $\langle \{\nu_z\},\ \{H_\lambda\}\rangle$ do not appear explicitly in the objective functions $f_{\mathrm{xor}}$ and $f_{\mathrm{mul}}$ since they are only used to check for feasibility. \\

The security of the protocol is stated in the following theorem, the proof of which is provided in App.~\ref{app:theorem spot-checking}.

\begin{theorem}\label{thm:spot-checking protocol security}
    For any choice of $n$, $p_e$ and $\epsilon$, the above spot-checking protocol generates a final output state $\rho_{\mathbf{K}E|\mathbf{t},\mathbf{z}}$ satisfying the following security condition:
    \begin{equation}
        \frac{1}{2}\sum_{\mathbf{t},\mathbf{z}} p(\mathbf{t},\mathbf{z}) \left\lVert \rho_{\mathbf{K}E|\mathbf{t},\mathbf{z}} - \rho_{\mathbf{K}E|\mathbf{t},\mathbf{z}}^{\text{ideal}} \right\rVert_1 \leq \epsilon,
    \end{equation}
    where $|\mathbf{K}|=m(\mathbf{t},\mathbf{z})$ for given $\mathbf{t}$ and $\mathbf{z}$ is defined in Eq.~\eqref{eq:length single-bit extractor} for the \texttt{XOR} extractor, or in Eq.~\eqref{eq:length m-bit extractors} for the $m$-bit extractors.
\end{theorem}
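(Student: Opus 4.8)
The plan is to reduce the transcript-averaged trace distance to the expectation of a single positive operator over the measurement device's global ancilla state $\sigma_{M_r}$, and then to exploit the fact that the length constraints \eqref{eq:constraints single-bit extractor length}--\eqref{eq:constraints m-bit extractor length} are engineered so that a per-round ``transfer operator'' equals the identity. First I would fix the type record $\mathbf{t}$, thereby fixing the set $\mathcal{E}$ of estimation rounds, the set $R$ of key rounds, and $n_r=|R|$. For each estimation record $\mathbf{z}$ I would write the subnormalized conditional state of the key-round ancillas as $p(\mathbf{t},\mathbf{z})\,\tilde\sigma_{M_R\mid\mathbf{z}} \propto \Tr_{M_{\mathcal E}}[\sigma_{M_r}\prod_{j\in\mathcal E} E^{a_j,x_j}_{M_j}]$, where $E^{a,x}_{M}:=\Tr_S[(\rho_S^x\otimes\mathbbm{1}_M)\Pi^a_{SM}]$ is the effective measurement element on $M$ obtained by tracing out the signal system; this factorization uses exactly the independent-preparation assumption \ref{ass:indep-prep} and the memoryless assumption \ref{ass:memoryless}, and that $\sum_a E^{a,x}_M=\mathbbm1_M$. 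Transcripts with $m(\mathbf t,\mathbf z)=0$ produce no output and contribute zero, so only those with $m>0$ matter.

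Next, for each transcript with $m>0$ I would bound the conditional trace distance by Thm.~\ref{thm: single-bit extraction security proof} (resp.~Thm.~\ref{thm: multi-bit extraction security proof}) applied to $\tilde\sigma_{M_R\mid\mathbf z}$, using the feasible dual $\langle\{\nu_z\},\{H_\lambda\}\rangle$ that certifies $m(\mathbf t,\mathbf z)$ in \eqref{eq:length single-bit extractor} (resp.~\eqref{eq:length m-bit extractors}). Feasibility of that dual—this is precisely the role of the $\{H_\lambda\}$ and of the constraints of \eqref{eq:dual}—is what makes Thm.~\ref{thm: ancilla operator inequality} applicable, so the operators $G_{M_i}$ entering the bounds are legitimate. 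The positive-length condition then translates, through the $-2\log(1/\epsilon)$ term (and, for the multi-bit case, the $-4\log n_r$ term and the floor), into $\sqrt2^{\,\sum_j\alpha_{z_j}+(\beta-1)n_r}\ge 1/\epsilon$; I would use this to replace the indicator $\mathbbm1[m>0]$ by the weight $\epsilon\,\sqrt2^{\,\sum_j\alpha_{z_j}+(\beta-1)n_r}$, the multi-bit case being identical up to the benign prefactor $n_r^2\sqrt2^{\,m-n_r-2}$ and the identity $\mathbbm1+G_{M_i}=2\sum_{a,x}\nu_{a,x}E^{a,x}_{M_i}$.

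The heart of the argument is a telescoping identity. Summing the resulting weighted trace over all type records (each round is of type \texttt{estimation} with probability $p_e$ and \texttt{raw key} with probability $p_r=1-p_e$) and over all estimation outcomes, each round contributes, to its own ancilla, the operator
\begin{equation*}
T_{M_i}=p_r\sqrt2^{\beta-1}G_{M_i}+\frac{p_e}{2}\sum_{a,x}\sqrt2^{\alpha_{a,x}}E^{a,x}_{M_i}.
\end{equation*}
Expanding $G_{M_i}=2\sum_{a,x}\nu_{a,x}E^{a,x}_{M_i}-\mathbbm1_{M_i}$ and using $\sum_a E^{a,x}_M=\mathbbm1_M$, the coefficient multiplying each $E^{a,x}_{M_i}$ collapses to one common constant precisely by the constraint \eqref{eq:constraints single-bit extractor length} (resp.~\eqref{eq:constraints m-bit extractor length}), so that $T_{M_i}=\mathbbm1_{M_i}$. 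Since $\sigma_{M_r}\succeq0$, the product over rounds telescopes to $\Tr[\sigma_{M_r}\prod_i T_{M_i}]=\Tr[\sigma_{M_r}]=1$, and carrying the $\epsilon$ (and the multi-bit prefactor) through the bookkeeping yields $\tfrac12\sum_{\mathbf t,\mathbf z}p(\mathbf t,\mathbf z)\,\|\rho_{\mathbf K E\mid\mathbf t,\mathbf z}-\rho^{\mathrm{ideal}}_{\mathbf K E\mid\mathbf t,\mathbf z}\|_1\le\epsilon$.

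The main obstacle is that the certifying dual $\langle\{\nu_z\},\beta,\{\alpha_z\}\rangle$ is chosen \emph{adaptively}, as the maximizer in \eqref{eq:length single-bit extractor}--\eqref{eq:length m-bit extractors}, and thus depends on the entire estimation record $\mathbf z$ (through its empirical outcome counts). This is in tension with the telescoping of the previous paragraph, which presupposes a single fixed dual: with an $\mathbf z$-dependent dual the factor $\sqrt2^{\alpha_{z_j}}E^{z_j}_{M_j}$ in round $j$ ceases to depend on round $j$ alone, breaking the per-round product structure. Resolving this is the crux, and I would handle it following the spot-checking analysis of \cite{Foreman_2025}: keep the weighted quantity as a genuine operator expectation over $\sigma_{M_r}$ throughout (never invoking independence of the outcomes, which fails because $\sigma_{M_r}$ may be correlated across rounds), and use that the transfer operator equals the identity for \emph{every} feasible dual, so that the supremum over duals induced by the maximization is controlled by the constraints of \eqref{eq:dual} together with the $\log(1/\epsilon)$ security margin built into $f_{\mathrm{xor}}$ and $f_{\mathrm{mul}}$.
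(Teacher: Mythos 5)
Your first three paragraphs are, in substance, the paper's own proof (Appendix~\ref{app:theorem spot-checking}): your effective elements $E^{a,x}_{M}$ are the paper's $2Q^{a,x}_{M_j}$ (the $1/2$ accounting for the uniform choice of $x$ in estimation rounds), your conditional key-round state is exactly Eq.~\eqref{eq:raw key state}, the conditional trace distance is bounded via Thm.~\ref{thm: single-bit extraction security proof} (resp.\ Thm.~\ref{thm: multi-bit extraction security proof}), the indicator of positive output length is replaced by the weight $\epsilon\,\sqrt{2}^{\sum_j \alpha_{z_j}+(\beta-1)n_r}$, and the sum over transcripts collapses because the constraints \eqref{eq:constraints single-bit extractor length}/\eqref{eq:constraints m-bit extractor length} together with the identities \eqref{eq:identities Q} make the per-round transfer operator equal to $\mathbbm{1}_{M_k}$, so the total is $\epsilon\Tr[\sigma_{M_g}]=\epsilon$. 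One point you leave implicit (as does the paper): re-inserting the $m=0$ transcripts after dropping the indicator requires every trace term to be nonnegative, which holds because $Q^{a,x}_{M_j}\succeq 0$ and $G_{M_i}\succeq 0$, the latter being a consequence of Thm.~\ref{thm: ancilla operator inequality}.

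Where you part ways with the paper is your final paragraph, and there your proposed resolution does not do what you claim. The adaptivity you flag is a genuine subtlety: the certifying tuple $\langle\{\alpha_z\},\beta,\{\nu_z\},\{H_\lambda\}\rangle$ is the maximizer in Eqs.~\eqref{eq:length single-bit extractor}/\eqref{eq:length m-bit extractors} and hence depends on $(\mathbf{t},\mathbf{z})$, while the telescoping step needs one parameter set common to all transcripts. But the observation that ``the transfer operator equals the identity for every feasible dual'' is circular here: it only says that for each \emph{fixed} feasible dual $P$ the weighted sum equals $\epsilon$; it does not bound the sum in which each term is evaluated at its own transcript-dependent maximizer, since a sum of pointwise-optimized nonnegative terms can exceed the maximum over a common $P$ of the sum (the max--sum interchange goes the wrong way). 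You should also know that the paper's appendix contains no additional machinery at this point: it bounds $m(\mathbf{t},\mathbf{z})\le\sqrt{2}^{\,f}$ at ``the optimal values'' and then performs the summations over $\mathbf{z}$ and $\mathbf{t}$ treating those values as transcript-independent constants, i.e., precisely the fixed-dual computation of your third paragraph. So your plan coincides with the paper's proof; the ``crux'' you defer to \cite{Foreman_2025} is not discharged by anything in this paper's argument, and it disappears only under the reading in which the protocol commits to a single feasible dual before execution---in which case your first three paragraphs already constitute a complete proof.
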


\section{Numerical simulations}\label{sec:numerics}
Randomness generation protocols in the overlap semi-DI framework presented in Sett.~\ref{set:SDI} have been studied based on the statistics of: unambiguous state discrimination, minimum error state discrimination, and maximum confidence discrimination experiments~\cite{Brask_2017,roch2022quantum,carceller2025improving}. In this section, we evaluate our protocol's performance on a family of behaviors which, to our knowledge, has not been considered in the literature. Namely, for a given overlap assumption $\delta$, the behaviors $\mathbf{p}_\delta$ we consider are defined as
\begin{align}
    p_\delta(a|x):=\begin{cases}
        a(1-2\delta)+\delta & \text{if } x=0,\\
        1-a & \text{if }x=1.
    \end{cases}\label{eq:behaviors}
\end{align}
As we show in a companion publication~\cite{mir2025measurement}, this family of behaviors can be experimentally realized with an optical platform employing polarization-switching phenomena in vertical-cavity surface-emitting lasers (VCSELs). For the purpose of the present work, however, these behaviors are treated as theoretical objects.  

In Fig.~\ref{fig:dual_vs_delta}, we plot the guessing probability in Eq.~\eqref{eq:classical-pguess-povm} for the behaviors in Eq.~\eqref{eq:behaviors}. As can be seen in the plot, when $\delta=1/2$ the unpredicatiblity is maximal, coinciding with a uniform distribution of the outcomes. Whereas, when $\delta\in\{0,1\}$ the behavior becomes completely deterministic and, hence, completely predictable. 

\begin{figure}
    \centering
    \includegraphics[width=\columnwidth]{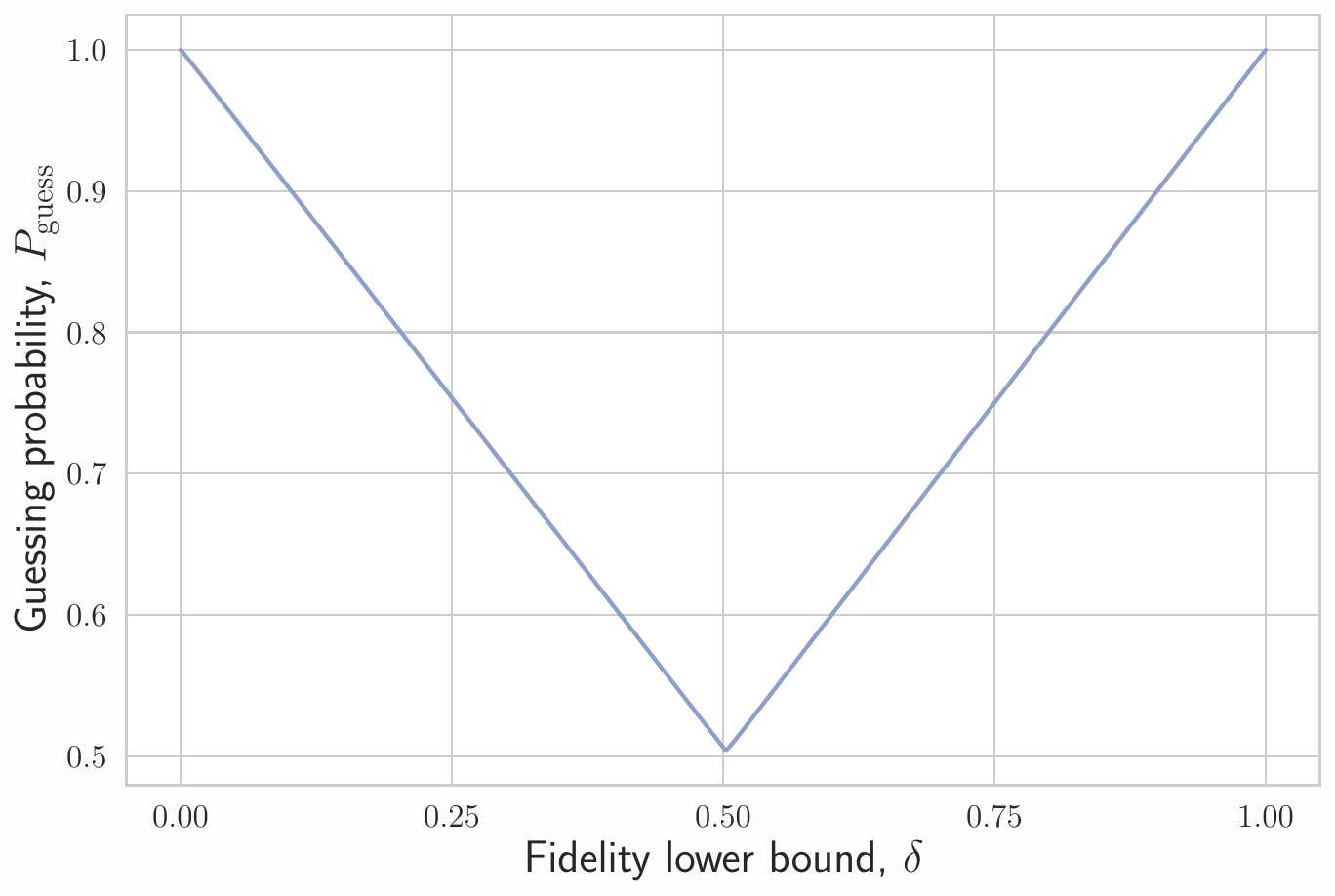}
    \caption{\textbf{Single-round $\boldsymbol{P_{{\rm guess}}}$.} We plot $P_{{\rm guess}}(A|E,X=0,\mathbf{p}_\delta,\delta)$ from Eq.~\eqref{eq:classical-pguess-povm} for the behaviors $\mathbf{p}_\delta$ in Eq.~\eqref{eq:behaviors}.}
    \label{fig:dual_vs_delta}
\end{figure}

In what follows, we study the randomness generation rate of our protocol when executed with honest devices $\mathcal{D}$ whose asymptotic behavior is described by Eq.~\eqref{eq:behaviors}, potentially subject to uncoloured noise (see below). 

Let us first consider the case of multi-bit extraction. Since the protocol's output is of variable length, we study the \emph{expected rate}, that is 
\begin{align}
    \overline{\mathrm{rate}}_\mathrm{mul}(n)=\mathbb{E}[\frac{m_\mathrm{mul}(\mathbf{T},\mathbf{Z})}{n}].\label{eq:finite-size-rate}
\end{align}
We will also consider the \emph{expected asymptotic rate}, i.e., the rate in the limit of infinitely many rounds, given by
\begin{align}
    &\overline{\mathrm{rate}}^\infty_\mathrm{mul}=\lim_{n\to\infty}\overline{\mathrm{rate}}_\mathrm{mul}(n)\nonumber\\
    &\qquad=\max_{\substack{
        \{\alpha_{a,x}\},\ \beta \\
        \{\nu_{a,x}\},\ \{H_\lambda\}
    }}p_e\left(\sum_{a,x} \frac{1}{2}p_\delta(a|x)\alpha_{a,x}\right) + p_r\beta,\label{eq:asym-rate-mul}
\end{align}
with the second equality following from the expression for $m_\mathrm{mul}$ in Eq.~\eqref{eq:length m-bit extractors} and the maximization subject to the constraints in Eq.~\eqref{eq:constraints m-bit extractor length}. Computing these rates involves a nonlinear optimization for which we do not have a general method guaranteeing convergence. Therefore, what we provide in this section are lower bounds to these quantities, resulting from feasible solutions to the optimization problems.

In Fig.~\ref{fig:delta_comparison}, we plot, for different values of the overlap $\delta$: 1) a lower bound to the asymptotic rate in Eq.~\eqref{eq:asym-rate-mul} optimized over $p_e$; and 2) for the best $p_e$ found in 1), lower bounds to the finite rates for different values of $n$. To compute the latter, we simulated the use of an honest IID device with single-round behavior given by Eq.~\eqref{eq:behaviors}. For each value of $n$ up to $10^6$, we calculated the expected finite rate over 100 samples. That is, we sampled from
\begin{align*}
    &p(\mathbf{t},\mathbf{z}=(a_1,x_1)\dots(a_n,x_n))\\&\qquad\qquad\qquad\qquad\qquad=\prod_{i=1}^n \frac{\Pr[T_i=t_i] p_\delta(a_i|x_i)}{2^n}
\end{align*}
100 times and ran the corresponding optimization $m_\mathrm{mul}(\mathbf{t},\mathbf{z})$. Then, we calculated the corresponding average and standard deviation. Since the deviation in the estimation only grows smaller with $n$, for $n$ larger than $10^6$, for which the calculated deviations were always lower than 1\%, we reduced the number of samples to 10. 

\begin{figure}
    \centering
    \includegraphics[width=\columnwidth]{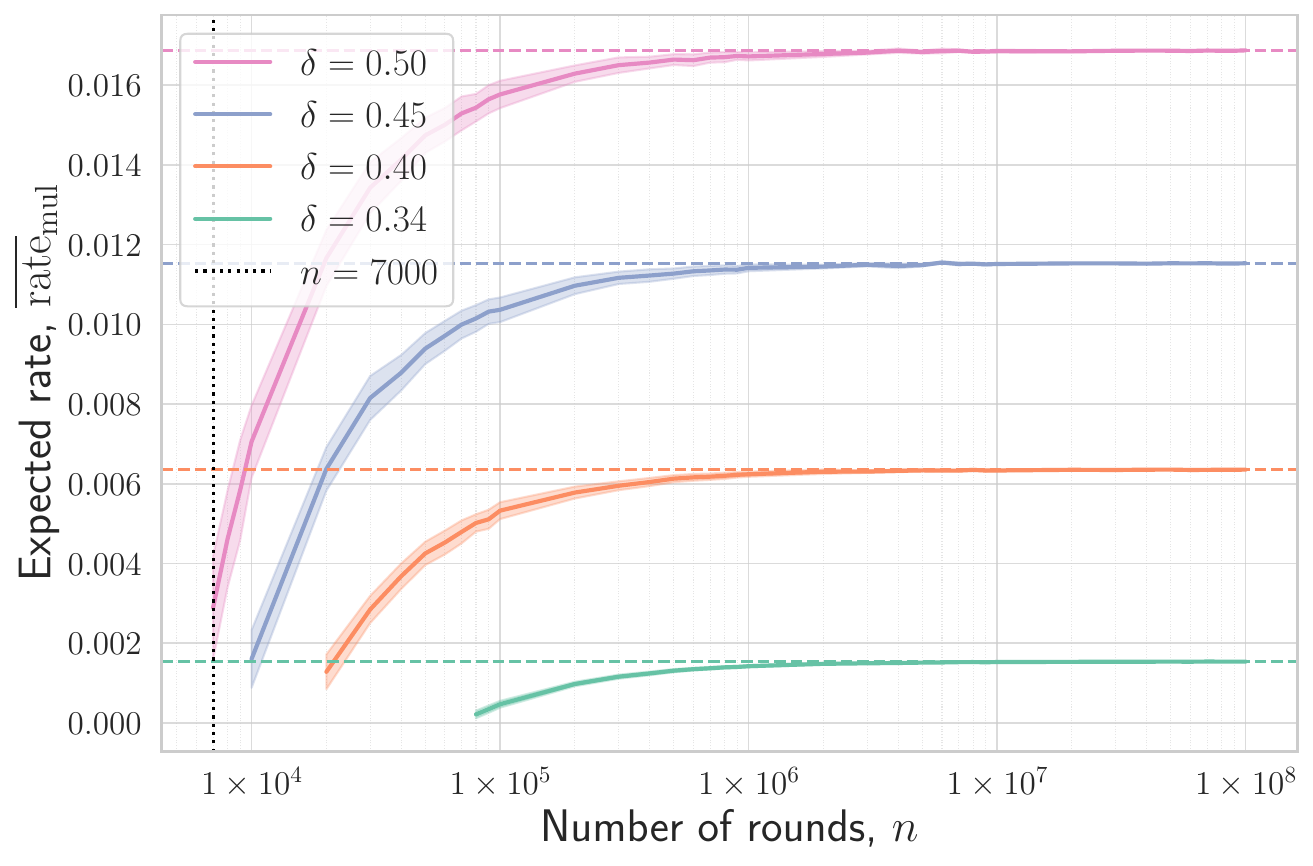}
    \caption{\textbf{Finite-size rates vs. $\boldsymbol{\delta}$}. For different values of the overlap $\delta$, we numerically lower bound the finite-size rates in Eq.~\eqref{eq:finite-size-rate} for an honest IID device with single-round behavior $\mathbf{p}_\delta$ in Eq.~\eqref{eq:behaviors}. Lower bounds to the asymptotic rates given by Eq.~\eqref{eq:asym-rate-mul} are also displayed with dashed lines.}
    \label{fig:delta_comparison}
\end{figure}

We observe that for the value $\delta=1/2$ which optimizes the single-round unpredictability, a positive rate is achievable already for $7\times10^3$ rounds. Additionally, as one might expect due to the strengthening of the assumptions, the best asymptotic rate is an order of magnitude larger than the one reported for the DI protocol from~\cite{Foreman_2025}. On the other hand, under these assumptions, optimal seeded extraction would provide an asymptotic rate larger than or equal to the single-round min-entropy $H_\mathrm{min}=-\log_2 P_\mathrm{guess}$, therefore more than an order of magnitude above what we observe for our seedless protocol (cf. Fig.~\ref{fig:dual_vs_delta}).

Next, we study how the rate is affected by uniform noise. That is, we consider that the protocol is run with an honest IID device with an $n$-round behavior $\mathbf{p}_\delta^\mathrm{noisy}$ such that 
\begin{align}
    p_\delta^\mathrm{noisy}(\mathbf{a}|\mathbf{x})=\prod_{i=1}^{n} \left((1-\gamma)p_\delta(a_i|x_i)+\frac{\gamma}{2}\right)\label{eq:noisy-behavior}
\end{align}

In Fig.~\ref{fig:noisy_comparison}, we plot the finite-size rate for different values of the noise rate $\gamma$ affecting the ideal $\delta=1/2$ behavior. Although, as expected, the rates decrease as the noise increases, the speed at which they do so is not as steep as what has been reported for other quantum cryptographic primitives in the literature. For example, in an implementation of the B92 quantum key distribution protocol~\cite{bennett1992quantum}, the rates reported in~\cite[Fig. 1]{metger2023security} decrease in two orders of magnitude for the same range of the noise parameter as in Fig.~\ref{fig:noisy_comparison}.

\begin{figure}
    \centering
    \includegraphics[width=\columnwidth]{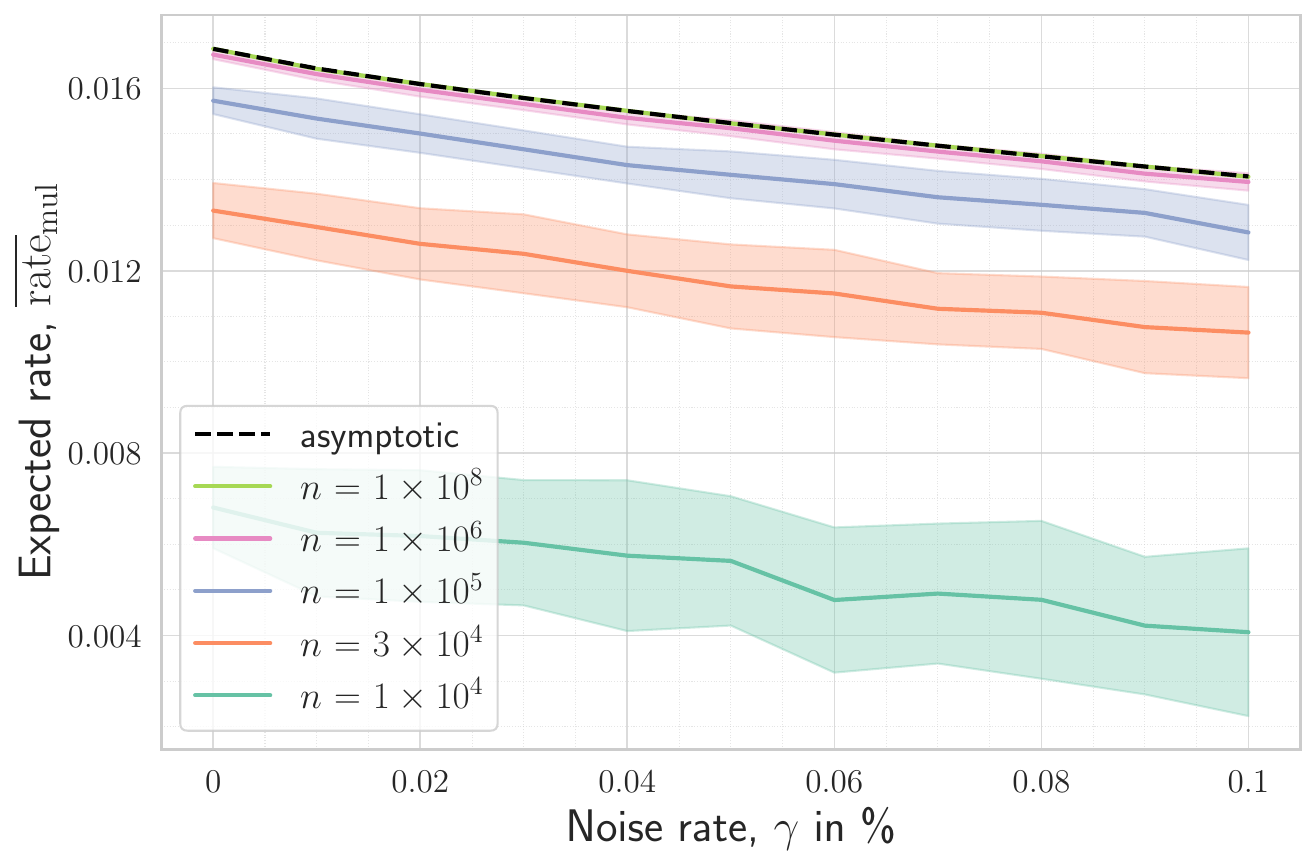}
    \caption{\textbf{Rate's noise tolerance.} For $\delta=1/2$, we plot lower bounds to the finite-size rates achievable by a honest but noisy device with input-output statistics given by Eq.~\eqref{eq:noisy-behavior} and as a function of the uncoloured noise rate $\gamma$.}
    \label{fig:noisy_comparison}
\end{figure}

Regarding the XOR extractor, for the DI construction~\cite{Foreman_2025} the authors reported that, in the limit of infinitely many rounds, in order to have a successful extraction when the protocol's devices are arbitrarily close to the CHSH local bound, the estimation probability had to be $p_e>0.5$. We studied the corresponding problem in our setting. Namely, given an overlap bound $\delta$, we look for the minimum $p_e$ such that 
\begin{align*}\label{eq:large n XOR}
&\lim_{n\to\infty}\mathbb{E}[f_\mathrm{xor}(\mathbf{T},\mathbf{Z})/n]&&\nonumber\\
& =\max_{\substack{
        \{\alpha_{a,x}\},\ \beta \\
        \{\nu_{a,x}\},\ \{H_\lambda\}
    }} p_e \Big(\sum_{a,x} \frac{1}{2}p_\delta(a|x)\alpha_{a,x}\Big) + p_r(\beta - 1)&&\nonumber\\
    &\quad\geq 0.&&
\end{align*}
In contrast with the DI case, we numerically observe that a single bit arbitrarily close to ideal can be extracted for every $p_e>0$ as long as $\delta\not\in\{0,1\}$, i.e, as long as $\mathbf{p}_\delta$ is not completely predictable (c.f. Fig.~\ref{fig:dual_vs_delta}).

\section{Conclusions}\label{sec:conclusions}

We have shown that the deterministic extractors studied in~\cite{Foreman_2025} for the DI setting can be applied in multiple prepare-and-measure frameworks with asymmetric partial trust, namely: a trusted-measurement setting, which may be device-dependent; a device-dependent, trusted-preparation setting; and the semi-DI framework with bounded state overlap developed in~\cite{Brask_2017}. Our main technical result connects the universally composable notion of security based on the trace distance to the expectation value of an operator constructed from dual solutions to the SDP for Eve's guessing probability. We expect that, in future works, this construction may be extended to other semi-DI frameworks and possibly to a general device-dependent setting with zero trust. As for the resulting randomness generation protocol, which we derive for the semi-DI framework, the small number of rounds required for a positive key rate and the noise tolerance observed are promising. On the other hand, the rates achieved are significantly below what seeded extractors can provide for the same setting. Therefore, a significant improvement in the protocol's efficiency is needed before it can become practically relevant for an industrial application.

\section*{Acknowledgements}

We thank Cameron Foreman for his insightful comments and suggestions. TFM and GS acknowledge funding from the Government of Spain (Severo Ochoa CEX2019-000910-S, FUNQIP, NextGeneration EU PRTR-C17.I1 and Torres Quevedo PTQ2021-011870) and European Union (QSNP 101114043 and Quantera Veriqtas), Fundació Cellex, Fundació Mir-Puig and Generalitat de Catalunya (CERCA program). TFM also acknowledges support from Pla de Doctorats Industrials del Departament de Recerca i Universitats de la Generalitat de Catalunya.

\section*{Data and code availability}
The Python code used to perform the numerical optimizations and generate the figures presented in this work is available at \url{https://github.com/QuanTomOptics/SDISeedlessRandExt}.

\typeout{}
\bibliographystyle{quantum}
\bibliography{References}
\clearpage
\onecolumngrid
\appendix

\section{Dual of the SDP for Eve's guessing probability in Eq. \ref{eq:classical-pguess-povm}} \label{app: proof of bound on p_g}
Let us restate the primal from Eq.~\eqref{eq:classical-pguess-povm} in the following equivalent form
\begin{equation}\label{eq:primal_app}
    P_{\mathrm{guess}}(A|E, X=0, \mathbf{p}, \delta) = \max_{\{N_{a|\lambda}\}_{a,\lambda}} \sum_{a=0}^1 \bra{\psi^0}N_{a|a}\ket{\psi^0},
\end{equation}
subject to
\begin{align}
    &\forall x,a,\quad\sum_{\lambda=0}^1 \bra{\psi^x}N_{a|\lambda} \ket{\psi^x} = p(a|x),\\
    &N_{a|\lambda} \succeq 0,\\
    &\sum_{a=0}^1 N_{a|\lambda} = p(\lambda) \mathbbm{1} = \frac{1}{2}\Tr\Big[\sum_{a=0}^1 N_{a|\lambda}\Big] \mathbbm{1},   
\end{align}
with $\ket{\psi^0}=\ket{0}$ and $\ket{\psi^1}=\sqrt{\delta}\ket{0}+\sqrt{1-\delta}\ket{1}$, and where we have absorbed the probabilities $p(\lambda)$ into the now unnormalized POVMs $\{N_{a|\lambda}\}$. To obtain the dual SDP, a Lagrangian is defined, which is a function of the primal SDP's variables $N_{a|\lambda}$ as well as Lagrange multipliers variables for each of the constraints in the primal SDP; namely, scalars $\nu_{a,x}$, and Hermitian $2\times2$ matrices $H_\lambda$ and $G_{a|\lambda}$, such that
\begin{align}
    \mathcal{L} &= \sum_{\lambda=0}^1 \delta_{a,\lambda}\bra{\psi^0} N_{a|\lambda} \ket{\psi^0} + \sum_{a,x=0}^1 \nu_{a,x} \left(p(a|x) - \sum_{\lambda=0}^1 \bra{\psi^x}N_{a|\lambda}\ket{\psi^x}\right) \nonumber \\
    &+ \sum_{\lambda=0}^{1} \Tr\Bigg[H_\lambda\left(\sum_{a=0}^1 N_{a|\lambda} - \frac{1}{2}\Tr\Big[\sum_{a=0}^1 N_{a|\lambda}\Big]\mathbbm{1}\right)\Bigg] + \sum_{a,\lambda=0}^{1} \Tr\Big[G_{a|\lambda}N_{a|\lambda}\Big].
\end{align}

The supremum $\mathcal{S}$ of the Lagrangian over the primal SDP variables is $\mathcal{S} \coloneqq \text{sup}_{\{N_{a|\lambda}\}_{a,\lambda}
} \mathcal{L}$. Note that, for any particular solution $\{N_{a|\lambda}\}_{a,\lambda}$ of the primal SDP, the two middle terms of the Lagrangian vanish due to the contraints of the primal. On the other hand, the first term of the Lagrangian is the objective of the primal SDP, and the last term is non-negative as long as $G_{a|\lambda} \succeq 0$, due to the second constraint of the primal and the fact that the trace of the product of two positive semidefinite operators is non-negative. Therefore, enforcing $G_{a|\lambda} \succeq 0$ guarantees that $P_{\mathrm{guess}} \leq \mathcal{S}$. To get useful bounds, $\mathcal{S}$ must be minimized over the dual SDP variables. However, $\mathcal{S}$ can be first rewritten in a more convenient form as
\begin{equation}
    \mathcal{S} = \text{sup}_{\{N_{a|\lambda}\}_{a,\lambda}} \sum_{a,\lambda=0}^1 \Tr\Big[N_{a|\lambda} K_{a|\lambda}\Big] + \sum_{a,x=0}^1 \nu_{a,x}p(a|x),
\end{equation}
where 
\begin{equation}
    K_{a|\lambda} \coloneqq \dyad{\psi^0}\delta_{a,\lambda} - \sum_{x=0}^1 \dyad{\psi^x} \nu_{a,x} + G_{a|\lambda} + H_\lambda - \frac{1}{2}\Tr\Big[H_\lambda\Big]\mathbbm{1}.
\end{equation}

Any values for the dual SDP variables can be chosen; in particular, $K_{a|\lambda} = 0$ greatly simplifies the problem. This still results in a suitable upper bound, albeit potentially suboptimal. Given that $G_{a|\lambda} \succeq 0$, setting $K_{a|\lambda} = 0$ implies that the remaining terms in the definition of $K_{a|\lambda}$ must be negative semidefinite. With this, the final form of the dual SDP reads
\begin{equation}
    P_{\mathrm{guess}}^\ast (A|E, X=0,\mathbf{p},\delta)\coloneqq \min_{\{H_\lambda\}_{\lambda},\{\nu_{a,x}\}_{a,x}} \sum_{a,x=0}^1 \nu_{a,x}p(a|x),
\end{equation}
subject to
\begin{equation}
    \quad \dyad{\psi^0}\delta_{a,\lambda} - \sum_{x=0}^1 \dyad{\psi^x} \nu_{a,x} + H_\lambda - \frac{1}{2}\Tr\Big[H_\lambda\Big]\mathbbm{1} \preceq 0,
\end{equation}
which we rewrite in an equivalent form in Eq.~\eqref{eq:dual}. The optimal value of this dual SDP yields an upper bound on Eve's optimal guessing probability, as $P_{\mathrm{guess}} \leq P_{\mathrm{guess}}^\ast$.

\section{Primal and dual SDP formulations for Eve's guessing probability in Setts.~\ref{set:untrusted_prep} and~\ref{set:untrusted_meas}}\label{app: extra sdps}
We start by stating the primal SDP form of Eve's guessing probability for Sett.~\ref{set:untrusted_meas}, due to its similarities with Sett.~\ref{set:SDI}, in a similar manner as in App.~\ref{app: proof of bound on p_g}, which reads
\begin{equation}
    P_{\mathrm{guess}}(A|E, \mathbf{p}) = \max_{\{N_{a|\lambda}\}_{a,\lambda}} \sum_{a=0}^1 \Tr[N_{a|a}\rho],
\end{equation}
subject to
\begin{align}
    &\forall k,a,\quad\sum_{\lambda=0}^1\Tr[N_{a|\lambda}\varphi^k] = \Tr[M_a\varphi^k] = p(a|k),\\
    &N_{a|\lambda} \succeq 0,\\
    &\sum_{a=0}^1 N_{a|\lambda} = p(\lambda) \mathbbm{1} = \frac{1}{2}\Tr\Big[\sum_{a=0}^1 N_{a|\lambda}\Big] \mathbbm{1},   
\end{align}
where $\mathbf{p}\in \mathcal{Q}_{\mathrm{P}}$, $\rho$ is the trusted and characterized preparation, $\{N_{a|\lambda}\}_a$ are some unnormalized POVMs that reconstruct our untrusted but characterized POVM $\{M_a\}_a$, and $\{\varphi^k\}_k$ is a \textit{tomographically complete set of states} that unequivocally defines $\{M_a\}_a$. For simplicity, we restrict to qubit dimensions, where a minimal tomographically complete set has $k\in\{0,1,2,3\}$ elements.

Following a similar derivation to that of App.~\ref{app: proof of bound on p_g}, by defining a Lagrangian $\mathcal{L}$ with some Lagrange multipliers and then taking its supremum over the primal SDP variables, it is straightforward to see that the dual SDP formulation within Sett.~\ref{set:untrusted_meas} is given by
\begin{equation}\label{eq:dual_sett2}
    P_{\mathrm{guess}}^\ast (A|E,\mathbf{p}) = \min_{\{H_\lambda\}_{\lambda},\{\nu_{a,k}\}_{a,k}} \sum_{a=0}^1\sum_{k=0}^3 \nu_{a,k}p(a|k),
\end{equation}
subject to
\begin{equation}
    \rho\delta_{a,\lambda} - \sum_{k=0}^3 \varphi^k \nu_{a,k} + H_\lambda - \frac{1}{2}\Tr\Big[H_\lambda\Big]\mathbbm{1} \preceq 0.
\end{equation}

In the case of Sett.~\ref{set:untrusted_prep}, the situation is slightly different, and the primal SDP now has the form
\begin{equation}
    P_{\mathrm{guess}}(A|E, \mathbf{p}) = \max_{\{\sigma_\lambda\}_{\lambda}} \sum_{a=0}^1 \Tr[M_a\sigma_a],
\end{equation}
subject to
\begin{align}
    &\forall b,\quad\sum_{\lambda=0}^1 \Tr[O_b\sigma_\lambda] = \Tr[O_b\rho] = p(b),\\
    &\sigma_\lambda \succeq 0, 
\end{align}
where $\mathbf{p}\in \mathcal{Q}_{\mathrm{M}}$, $\{M_a\}_a$ is the trusted and characterized measurement, $\{\sigma_\lambda\}_\lambda$ are some unnormalized states that reconstruct our untrusted but characterized state $\rho$, and $\{O_b\}_b$ is an \textit{informationally complete} (IC) POVM that unequivocally defines $\rho$. Restricting to qubit dimensions again for simplicity, we have that $b\in\{0,1,2,3\}$ for a minimal IC-POVM. Note that the normalization constraint for $\{\sigma_\lambda\}_\lambda$ is implied by the first constraint and the fact that $\{O_b\}_b$ sums up to the identity. In the case where the preparation is left uncharacterized, the IC-POVM condition is replaced by $\sum_\lambda\Tr[M_a\sigma_\lambda ]=p(a)$, $\forall a$, and the following derivations are carried out equivalently.

Proceeding as before, the formulation of the Lagrangian in this case is greatly simplified, where we have that
\begin{equation}
    \mathcal{L} = \sum_{\lambda=0}^1\delta_{a,\lambda}\Tr[M_a\sigma_\lambda ] + \sum_{b=0}^3\nu_b\left(p(b) - \sum_{\lambda=0}^1\Tr[O_b\sigma_\lambda] \right) + \sum_{\lambda=0}^1\Tr[G_\lambda\sigma_\lambda],
\end{equation}
which, after taking the supremum over $\{\sigma_\lambda\}_\lambda$, leads to the dual SDP given by
\begin{equation}\label{eq:dual_sett1}
    P_{\mathrm{guess}}^\ast (A|E,\mathbf{p}) = \min_{\{\nu_{b}\}_{b}} \sum_{b=0}^3\nu_{b}p(b),
\end{equation}
subject to
\begin{equation}
    \delta_{a,\lambda}M_a - \sum_{b=0}^3\nu_bO_b \preceq 0.
\end{equation}

\section{Proof of Thm.~\ref{thm: ancilla operator inequality}}\label{app: proof of ancilla operator inequality}
We restate Thm.~\ref{thm: ancilla operator inequality} for ease of read and then prove it.
\setcounter{theorem}{0}
\begin{theorem}
    For every $\delta\in[0,1]$, if $\langle \{\nu_{a,x}\}_{a,x=0}^1,\{H_\lambda\}_{\lambda=0}^1\rangle$ is a solution to Eq. \eqref{eq:dual} for some behavior $\mathbf{p}$, then, for any $\{\Pi_{SM}^a\}_{a=0}^1$ and $\{\rho_{S}^x\}_{x=0}^1$ such that $F(\rho_S^0,\rho_S^1)\geq\delta$, the following inequality over operators in $\mathcal{H}_M$ holds:
    \begin{equation}
        \pm \Tr_S\Big[(\rho_{S}^0 \otimes \mathbbm{1}_{M})(\Pi_{SM}^0-\Pi_{SM}^1)\Big] \preceq G_{M}, 
    \end{equation}
    with $G_M \coloneqq 2\sum_{a,x} \nu_{a,x} \Tr_{S}\Big[(\rho_{S}^x \otimes \mathbbm{1}_{M}) \Pi_{SM}^a\Big] - \mathbbm{1}_{M}$.
\end{theorem}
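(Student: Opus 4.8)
The plan is to first reduce the arbitrary preparations $\{\rho_S^x\}$ to the canonical qubit reference states $\ket{\psi^0},\ket{\psi^1}$ that appear in the dual program, and then to contract the dual feasibility constraints against the (pulled-back) measurement operators. For the reduction I would use that the reference states satisfy $|\langle\psi^0|\psi^1\rangle|^2=\delta$, so the hypothesis $F(\rho_S^0,\rho_S^1)\ge\delta$ is exactly the condition guaranteeing a channel $\mathcal E$ on $\mathcal H_S$ with $\mathcal E(\dyad{\psi^x})=\rho_S^x$ for $x=0,1$ (concretely, choose purifications $\ket{\phi^x}_{SP}$ of $\rho_S^x$ with $\langle\phi^0|\phi^1\rangle=\sqrt\delta$, possible since $\sqrt\delta\le\sqrt{F}$, and set $\mathcal E(\cdot)=\Tr_P[W\cdot W^\dagger]$ for the isometry $W:\ket{\psi^x}\mapsto\ket{\phi^x}$). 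Writing $P_a^{(x)}:=\Tr_S[(\rho_S^x\otimes\mathbbm{1}_M)\Pi_{SM}^a]$ and passing $\mathcal E$ to the Heisenberg picture, every $\ket v\in\mathcal H_M$ satisfies $\bra v P_a^{(x)}\ket v=\Tr[(\dyad{\psi^x}\otimes\dyad v)\tilde\Pi^a]$, where $\tilde\Pi^a:=(\mathcal E^\dagger\otimes\mathrm{id}_M)(\Pi_{SM}^a)$. Since $\mathcal E^\dagger$ is unital and completely positive, $\{\tilde\Pi^a\}_a$ is a genuine POVM on $\mathbb C^2\otimes\mathcal H_M$ (projectivity of $\Pi_{SM}^a$ is not needed). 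This expresses both sides of the claimed inequality purely in terms of the reference states and $\{\tilde\Pi^a\}_a$.

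Next I would rewrite the target as the two positive-semidefiniteness statements $G_M\mp T\succeq0$, with $T:=P_0^{(0)}-P_1^{(0)}$. Using $P_0^{(x)}+P_1^{(x)}=\mathbbm{1}_M$ and testing against an arbitrary unit vector $\ket v\in\mathcal H_M$, one gets $\bra v(G_M\mp T)\ket v=\sum_a\Tr[(C_a^\pm\otimes\dyad v)\tilde\Pi^a]-1$, where $C_a^\pm:=2(\nu_{a,0}\dyad{\psi^0}+\nu_{a,1}\dyad{\psi^1})\mp(-1)^a\dyad{\psi^0}$ are Hermitian operators on $\mathbb C^2$. I would then lower-bound each $C_a^\pm$ using the dual constraint $\dyad{\psi^0}\delta_{a,\lambda}-\sum_x\nu_{a,x}\dyad{\psi^x}+\hat H_\lambda\preceq0$, where $\hat H_\lambda:=H_\lambda-\tfrac12\Tr[H_\lambda]\mathbbm{1}$ is \emph{traceless}. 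The decisive choice is to evaluate this constraint at $\lambda=0$ for the upper sign and at $\lambda=1$ for the lower sign: in either case one obtains, simultaneously for both $a\in\{0,1\}$, the bound $C_a^\pm\succeq D^\pm$ with $D^+=\dyad{\psi^0}+2\hat H_0$ and $D^-=\dyad{\psi^0}+2\hat H_1$, and crucially $\Tr[D^\pm]=1$.

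Finally, since $C_a^\pm-D^\pm\succeq0$ and $\dyad v\succeq0$, the operator $(C_a^\pm-D^\pm)\otimes\dyad v$ is positive semidefinite; because $\tilde\Pi^a\succeq0$ and the trace of a product of positive-semidefinite operators is nonnegative, $\Tr[((C_a^\pm-D^\pm)\otimes\dyad v)\tilde\Pi^a]\ge0$. Hence $\bra v(G_M\mp T)\ket v\ge\sum_a\Tr[(D^\pm\otimes\dyad v)\tilde\Pi^a]-1$, and POVM completeness $\sum_a\tilde\Pi^a=\mathbbm{1}_2\otimes\mathbbm{1}_M$ together with $\Tr[D^\pm]=1$ collapses the remaining sum to $\Tr[D^\pm]\Tr[\dyad v]=1$. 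This yields $\bra v(G_M\mp T)\ket v\ge0$ for every $\ket v$, establishing both operator inequalities and hence the theorem.

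I expect the main obstacle to be the first step: justifying the channel reduction and the Heisenberg-picture pullback rigorously, in particular that one can realize purifications with overlap exactly $\sqrt\delta$ and that the pulled-back $\{\tilde\Pi^a\}_a$ is a valid POVM carrying all the relevant statistics. Once this reduction is in place, the $\lambda$-selection in the contraction and the coefficient bookkeeping are routine, the only genuinely essential algebraic input being the tracelessness of $\hat H_\lambda$, which is what makes $\Tr[D^\pm]=1$ and lets completeness of $\{\tilde\Pi^a\}_a$ close the argument.
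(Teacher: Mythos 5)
Your proof is correct, but it follows a genuinely different route from the paper's. The paper's proof is a two-line reduction: since the behavior $\mathbf{p}$ does not appear in the constraints of the dual Eq.~\eqref{eq:dual}, a feasible dual solution is feasible for \emph{every} behavior in $\Qover$, so the scalar predictability bound Eq.~\eqref{eq: predictability bound} holds on all of $\Qover$; applying it to $p'(a|x)=\Tr[(\rho_S^x\otimes\sigma_M)\Pi_{SM}^a]$ (which lies in $\Qover$ because tensoring both preparations with the same $\sigma_M$ preserves the fidelity bound) and using that $\sigma_M$ is an arbitrary state lifts the scalar inequality to the operator inequality. You instead test the operator inequality on pure states $\ket{v}$ and re-derive the required scalar inequality from first principles: an explicit Uhlmann-purification/Stinespring channel $\mathcal{E}$ mapping the canonical states to $\{\rho_S^x\}$, a Heisenberg-picture pullback of the measurement to a POVM $\{\tilde\Pi^a\}$ on $\mathbb{C}^2\otimes\mathcal{H}_M$, and a direct contraction of the dual PSD constraints---at $\lambda=0$ for one sign and $\lambda=1$ for the other---where tracelessness of $\hat{H}_\lambda$ and POVM completeness close the bound. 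I checked the bookkeeping ($C_a^\pm\succeq D^\pm$ simultaneously for both values of $a$, and $\Tr[D^\pm]=1$) and it is sound, as is your construction of purifications with overlap exactly $\sqrt{\delta}$ (enlarge the purifying register and rotate down continuously from the Uhlmann overlap $\sqrt{F}\ge\sqrt{\delta}$). The trade-off is worth noting: the paper's route is short because it reuses Eq.~\eqref{eq: predictability bound}, but the validity of that inequality on all of $\Qover$ silently relies on every behavior from states with $F\ge\delta$ being primal-feasible, i.e., reproducible with the canonical pure states---precisely the channel-reduction fact you prove explicitly, and which the paper inherits from the framework of Refs.~\cite{Brask_2017,Senno_2023}. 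Your argument is therefore longer but self-contained: it never invokes $P_{\rm guess}$ or primal feasibility, only the dual constraints; in effect you unpack the weak-duality step for the two deterministic guessing strategies of Eve, which is exactly what your per-sign choice of $\lambda$ encodes.
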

\begin{proof}
Notice that in the constraints of the dual in Eq. \eqref{eq:dual} the behavior $\mathbf{p}$ does not appear. This implies that any feasible solution $\langle \{H_\lambda\},\{\nu_{a,x}\} \rangle $ for a given behavior $\mathbf{p}$ and a given $\delta\in[0,1]$ is also a feasible solution for any other behavior $\mathbf{p'}\in\Qover$. Therefore, Eq. \eqref{eq: predictability bound} generalizes to
\begin{equation}
    \forall\mathbf{p'}\in\Qover,\quad\Big| p'(0|0) - p'(1|0) \Big| \leq 2\sum_{a,x=0}^1 \nu_{a,x} p'(a|x) - 1.
\end{equation}
In particular, for $p'(a|x)=\Tr\left[(\rho_{S}^x \otimes \sigma_{M})\Pi_{SM}^a\right]$, we have
\begin{equation}
    \pm \Tr\Big[(\rho_{S}^0 \otimes \sigma_{M})(\Pi_{SM}^0 - \Pi_{SM}^1)\Big] \leq 2\sum_{a,x=0}^1 \nu_{a,x} \Tr\Big[(\rho_S^x \otimes \sigma_{M})\Pi_{SM}^a\Big] - 1.
\end{equation}
Splitting the total trace into partial traces, we have that
\begin{align}
    &\pm \Tr_{M}\Bigg[\sigma_{M} \Tr_{S}\Big[(\rho_{S}^0 \otimes \mathbbm{1}_{M})(\Pi_{SM}^0 - \Pi_{SM}^1)\Big]\Bigg] \leq&& \nonumber \\
    &\qquad\qquad\qquad\qquad\qquad\qquad\Tr_{M}\Bigg[\sigma_{M}\left(2\sum_{a,x=0}^1 \nu_{a,x} \Tr_{S}\Big[(\rho_S^x \otimes \mathbbm{1}_{M})\Pi_{SM}^a\Big] - \mathbbm{1}_M\right)\Bigg].&&
\end{align}
Since this inequality holds for an arbitrary $\sigma_{M} \in M$, the operators inside the partial trace over $M$ must satisfy
\begin{equation}
    \pm\Tr_{S}\Big[(\rho_{S}^0 \otimes \mathbbm{1}_{M})(\Pi_{SM}^0 - \Pi_{SM}^1)\Big] \preceq 2\sum_{a,x=0}^1 \nu_{a,x} \Tr_{S}\Big[(\rho_S^x \otimes \mathbbm{1}_{M})\Pi_{SM}^a\Big] - \mathbbm{1}_M \coloneqq G_M.
\end{equation}
\end{proof}

\section{Formulations of Thm.~\ref{thm: ancilla operator inequality} within Setts.~\ref{set:untrusted_prep} and~\ref{set:untrusted_meas}}\label{app: extension of ancilla operator inequality}

By taking a look at the bound in Eq.~\eqref{eq:dual_sett1}, corresponding to Sett.~\ref{set:untrusted_prep}, and the bound in Eq.~\eqref{eq:dual_sett2}, corresponding to Sett.~\ref{set:untrusted_meas}, it is straightforward to see that they have the same structure as the one in Eq~\eqref{eq:dual}. Thus, they also provide bounds on the predictability of the outcomes similarly to Eq.~\eqref{eq: predictability bound}. Specifically, given some feasible solutions to Eqs.~\eqref{eq:dual_sett1} and~\eqref{eq:dual_sett2} for given behaviors $\mathbf{p}_\mathrm{M}\in\mathcal{Q}_\mathrm{M}$ and $\mathbf{p}_\mathrm{P}\in\mathcal{Q}_\mathrm{P}$, respectively, we have in general that
\begin{align}
    \forall\mathbf{p}'_{\mathrm{M}}\in\mathcal{Q}_{\mathrm{M}},&\quad\Big| p'(0) - p'(1) \Big| \leq 2\sum_{a=0}^1\sum_{k=0}^3 \nu_{a,k} p'(a|k) - 1\text{ \ref{set:untrusted_prep}}, \nonumber\\
    \forall\mathbf{p}'_{\mathrm{P}}\in\mathcal{Q}_{\mathrm{P}},&\quad\Big| p'(0) - p'(1) \Big| \leq 2\sum_{b=0}^3 \nu_{b} p'(b) - 1 \text{ \ref{set:untrusted_meas}},
\end{align}
where, in an abuse of notation, we are using the same $p$ to denote the behavior of the device in a randomness generation stage, $p(a) = \Tr[\rho M_a]$, and in a characterization stage, $p(b) = \Tr[O_b\rho]$~\ref{set:untrusted_prep} or $p(a|k) = \Tr[M_a\varphi^k]$~\ref{set:untrusted_meas}. Introducing the Naimark dilation picture, we can write these bounds in operational form as
\begin{align}
    &\pm \Tr\Big[(\rho_{S} \otimes \dyad{0}_M)(\Pi_{SM}^0 - \Pi_{SM}^1)\Big] \leq 2\sum_{b=0}^3 \nu_{b} \Tr\Big[(\rho_S \otimes \dyad{0}_{M})\tilde{\Pi}_{SM}^b\Big] - 1 \text{ \ref{set:untrusted_prep}}, \nonumber\\
    &\pm \Tr\Big[(\rho_{S} \otimes \sigma_{M})(\Pi_{SM}^0 - \Pi_{SM}^1)\Big] \leq 2\sum_{a=0}^1\sum_{k=0}^3 \nu_{a,k} \Tr\Big[(\varphi^k_S \otimes \sigma_{M})\Pi_{SM}^a\Big] - 1 \text{ \ref{set:untrusted_meas}},
\end{align}
where $(\dyad{0}_M,\{\Pi_a\}_a)$ is a standard Naimark dilation of $\{M_a\}_a$, and $(\dyad{0}_M,\{\tilde{\Pi}_a\}_a)$ is a standard Naimark dilation of the IC-POVM $\{O_b\}_b$.

Following the same steps as in App.~\ref{app: proof of ancilla operator inequality} by splitting into partial traces, where in the case of Sett.~\ref{set:untrusted_prep} the outside partial trace is over system $S$ instead of system $M$, we obtain the following reformulations of Thm.~\ref{thm: ancilla operator inequality}.

\setcounter{theorem}{0}
\begin{theorem}[\ref{set:untrusted_prep}]
    If $\langle \{\nu_{b}\}_{b=0}^3\rangle$ is a solution to Eq.~\eqref{eq:dual_sett1} for some behavior $\mathbf{p}\in\mathcal{Q}_\mathrm{M}$, then, for any standard Naimark dilation $(\dyad{0}_M,\{\Pi_{SM}^a\}_{a=0}^1)$ of $\{M_a\}_{a=0}^1$ and any standard Naimark dilation $(\dyad{0}_M,\{\tilde{\Pi}_{SM}^b\}_{b=0}^3)$ of an IC-POVM $\{O_b\}_{b=0}^3$, the following inequality over operators in $\mathcal{H}_S$ holds:
    \begin{equation}
    \pm \Tr_M\Big[(\mathbbm{1}_S \otimes \dyad{0}_M)(\Pi_{SM}^0-\Pi_{SM}^1)\Big] \preceq G_{S}, 
\end{equation}
with $G_S \coloneqq 2\sum_{b} \nu_{b} \Tr_{M}\Big[(\mathbbm{1}_S \otimes \dyad{0}_{M}) \tilde{\Pi}_{SM}^b\Big] - \mathbbm{1}_{S}$.
\end{theorem}

\setcounter{theorem}{0}
\begin{theorem}[\ref{set:untrusted_meas}]
    If $\langle \{\nu_{a,k}\}_{a=0}^1{}_{k=0}^3,\{H_\lambda\}_{\lambda=0}^1\rangle$ is a solution to Eq.~\eqref{eq:dual_sett2} for some behavior $\mathbf{p}\in\mathcal{Q}_\mathrm{P}$, then, for any $\{\Pi_{SM}^a\}_{a=0}^1$, $\rho_S$, and tomographically complete set of states $\{\varphi^k\}_{k=0}^3$, the following inequality over operators in $\mathcal{H}_M$ holds:
    \begin{equation}
    \pm \Tr_S\Big[(\rho_{S} \otimes \mathbbm{1}_M)(\Pi_{SM}^0-\Pi_{SM}^1)\Big] \preceq G_{M}, 
\end{equation}
with $G_M \coloneqq 2\sum_{a,k} \nu_{a,k} \Tr_{S}\Big[(\varphi^k_{S} \otimes \mathbbm{1}_{M}) \Pi_{SM}^a\Big] - \mathbbm{1}_{M}$.
\end{theorem}

\section{Proof of Thm.~\ref{thm: single-bit extraction security proof}}\label{app: proof of single-bit extraction security proof}

This proof is adapted directly from~\cite{Foreman_2025}, with key modifications to accommodate for Setts.~\ref{set:untrusted_prep}-\ref{set:SDI}. The cq-state defined in Eq.~\eqref{eq: actual cq-state} can be written in the single-bit case as
\begin{equation}\label{eq:initial cq-state}
    \rho_{\mathbf{K}E} = \sum_{k=0}^1 \sum_{\mathbf{a}\in\{0,1\}^{n_r}} \dyad{k} \otimes \delta_{k,\texttt{XOR}(\mathbf{a})} \Tr_{S_rM_r}\Bigg[\rho_{S_rM_rE} \prod_{i=1}^{n_r} \Pi_{S_iM_i}^{a_i}\Bigg],
\end{equation}
where, from Assumptions~\ref{ass:uncorr-prep-and-meas}-\ref{ass:indep-prep}, we have that

\begin{itemize}
    \item $S_r = \bigotimes_{i=1}^{n_r} S_i$ is Alice's system across all raw key generation rounds. In Sett.~\ref{set:untrusted_prep}, it is prepared in a state $\rho_{S_r}$ that may exhibit correlations between rounds. In Setts.~\ref{set:untrusted_meas} and~\ref{set:SDI}, it is prepared in the product states $\rho_{S_r}=\bigotimes_{i=1}^{n_r} \rho_{S_i}$ and $\rho_{S_r} = \bigotimes_{i=1}^{n_r} \rho_{S_i}^{0}$, respectively, where in the latter the input is fixed at $x_i = 0$ for all rounds. This system is assumed to be uncorrelated with the measurement device. 
    \item $M_r = \bigotimes_{i=1}^{n_r} M_i$ represents the internal system of the measurement device. In Sett.~\ref{set:untrusted_prep}, its state is a product state $\sigma_{M_r}=\bigotimes_{i=1}^{n_r}\dyad{0}_{M_i}$. In Setts.~\ref{set:untrusted_meas} and~\ref{set:SDI}, its state $\sigma_{M_r}$ is a general state that may exhibit correlations across rounds.
    \item $E$ is Eve's system, and she holds either a purification of $\rho_{S_r}$ in Sett.~\ref{set:untrusted_prep}, or a purification of $\sigma_{M_r}$ in Setts.~\ref{set:untrusted_meas} and~\ref{set:SDI}.
    \item $\Pi_{S_iM_i}^{a_i}$ denotes the dilated measurement operator corresponding to outcome $a_i$ in round $i$. 
\end{itemize}
The initial global state in the protocol, following Eq.~\eqref{eq:cq-state}, is then given by
\begin{align}\label{eq:initial state}
     \rho_{S_rM_rE} &= \dyad{\Psi}_{S_rE} \otimes  \left(\bigotimes_{i=1}^{n_r} \dyad{0}_{M_i}\right)\text{~\ref{set:untrusted_prep}}, \nonumber\\  \rho_{S_rM_rE} &= \left(\bigotimes_{i=1}^{n_r} \rho_{S_i}\right) \otimes \dyad{\Phi}_{M_rE}\text{~\ref{set:untrusted_meas}}, \nonumber\\
    \rho_{S_rM_rE} &= \left(\bigotimes_{i=1}^{n_r} \rho_{S_i}^{0}\right) \otimes \dyad{\Phi}_{M_rE}\text{~\ref{set:SDI}}.
\end{align}
Due to the similarities between Setts.~\ref{set:untrusted_meas} and~\ref{set:SDI}, for the remainder of the proof we will only focus on Setts.~\ref{set:untrusted_prep} and~\ref{set:untrusted_meas}, as the derivations for Sett.~\ref{set:SDI} follow analogously. Substituting Eq.~\eqref{eq:initial state} into Eq.~\eqref{eq:initial cq-state}, using the linearity of the partial trace, and factoring out the operators that act trivially on the subsystem over which the partial trace is taken, it follows that
\begin{align}\label{eq:modified cq-state}
    \rho_{\mathbf{K}E} &= \sum_{k=0}^1 \dyad{k} \otimes \Tr_{S_r}\Bigg[\left(\sum_{\mathbf{a}} \delta_{k,\texttt{XOR}(\mathbf{a})} \prod_{i=1}^{n_r}\Tr_{M_i}\Big[(\mathbbm{1}_{S_i}\otimes \dyad{0}_{M_i})\Pi_{S_iM_i}^{a_i}\Big]\right)\dyad{\Psi}_{S_rE}\Bigg]\text{~\ref{set:untrusted_prep}}, \nonumber\\
    \rho_{\mathbf{K}E} &= \sum_{k=0}^1 \dyad{k} \otimes \Tr_{M_r}\Bigg[\left(\sum_{\mathbf{a}} \delta_{k,\texttt{XOR}(\mathbf{a})} \prod_{i=1}^{n_r}\Tr_{S_i}\Big[(\rho_{S_i} \otimes \mathbbm{1}_{M_i})\Pi_{S_iM_i}^{a_i}\Big]\right)\dyad{\Phi}_{M_rE}\Bigg]\text{~\ref{set:untrusted_meas}}.
\end{align}
Focusing now on the terms inside the parentheses, we first define the operator
\begin{equation}
    C_{S_iM_i} = \Pi_{S_iM_i}^{0} - \Pi_{S_iM_i}^1,
\end{equation}
so that the projectors can be expressed as
\begin{equation}\label{eq:inverse relationship}
    \Pi_{S_iM_i}^{a_i} = \frac{1}{2}\Big(1 + (-1)^{a_i}C_{S_iM_i}\Big),
\end{equation}
noting that $\Pi_{S_iM_i}^0 + \Pi_{S_iM_i}^1 = \mathbbm{1}_{S_iM_i}$. Substituting this expression inside Eq.~\eqref{eq:modified cq-state}, the terms inside the parentheses become
\begin{align}
    &\frac{1}{2^{n_r}} \sum_{\mathbf{a}} \delta_{k,\texttt{XOR}(\mathbf{a})} \prod_{i=1}^{n_r}\left(\mathbbm{1}_{S_i} + (-1)^{a_i}\Tr_{M_i}\Big[(\mathbbm{1}_{S_i} \otimes \dyad{0}_{M_i})C_{S_iM_i}\Big]\right)\text{~\ref{set:untrusted_prep}}, \nonumber\\
    &\frac{1}{2^{n_r}} \sum_{\mathbf{a}} \delta_{k,\texttt{XOR}(\mathbf{a})} \prod_{i=1}^{n_r}\left(\mathbbm{1}_{M_i} + (-1)^{a_i}\Tr_{S_i}\Big[(\rho_{S_i} \otimes \mathbbm{1}_{M_i})C_{S_iM_i}\Big]\right)\text{~\ref{set:untrusted_meas}}.
\end{align}

Next, note that $C_{S_iM_i}$ is full rank, which means that $(C_{S_iM_i})^0 = \mathbbm{1}_{S_iM_i}$. The purpose of the parenthesis is to avoid confusing the exponent with other labels. Moreover, by rewriting the delta function as
\begin{equation}
    \delta_{k,\texttt{XOR}(\mathbf{a})} = \frac{1}{2}\Big[1 + (-1)^{\mathbf{a}\cdot\mathbf{1} + k}\Big],
\end{equation}
the expressions become
\begin{align}
    &\frac{1}{2^{n_r + 1}} \sum_{\mathbf{a}}\Bigg[\sum_{\mathbf{r}} \left((-1)^{\mathbf{a}\cdot\mathbf{r}} \prod_{i=1}^{n_r}\Tr_{M_i}\Big[(\mathbbm{1}_{S_i} \otimes \dyad{0}_{M_i})(C_{S_iM_i})^{r_i}\Big]\right) \nonumber\\
    &\qquad\qquad\quad+ (-1)^{k}\sum_{\mathbf{r}'}\left((-1)^{\mathbf{a} \cdot (\mathbf{r}' + \mathbf{1})}\prod_{i=1}^{n_r}\Tr_{M_i}\Big[(\mathbbm{1}_{S_i} \otimes \dyad{0}_{M_i})(C_{S_iM_i})^{r_i'}\Big]\right)\Bigg]\text{~\ref{set:untrusted_prep}},\nonumber\\
    &\frac{1}{2^{n_r + 1}} \sum_{\mathbf{a}}\Bigg[\sum_{\mathbf{r}} \left((-1)^{\mathbf{a}\cdot\mathbf{r}} \prod_{i=1}^{n_r}\Tr_{S_i}\Big[(\rho_{S_i} \otimes \mathbbm{1}_{M_i})(C_{S_iM_i})^{r_i}\Big]\right) \nonumber\\
    &\qquad\qquad\quad+ (-1)^{k}\sum_{\mathbf{r}'}\left((-1)^{\mathbf{a} \cdot (\mathbf{r}' + \mathbf{1})}\prod_{i=1}^{n_r}\Tr_{S_i}\Big[(\rho_{S_i} \otimes \mathbbm{1}_{M_i})(C_{S_iM_i})^{r_i'}\Big]\right)\Bigg]\text{~\ref{set:untrusted_meas}}.
\end{align}

By a standard symmetry argument, when performing the summation over $\mathbf{a}$, only the terms $\mathbf{r}=\mathbf{0}$ and $\mathbf{r}'=\mathbf{1}$ survive. Therefore, the expressions become
\begin{align}
    &\frac{1}{2}\Bigg(\mathbbm{1}_{S_r} + (-1)^k\prod_{i=1}^{n_r} \Tr_{M_i}\Big[(\mathbbm{1}_{S_i} \otimes \dyad{0}_{M_i})C_{S_iM_i}\Big]\Bigg)\text{~\ref{set:untrusted_prep}},\nonumber\\
    &\frac{1}{2}\Bigg(\mathbbm{1}_{M_r} + (-1)^k\prod_{i=1}^{n_r} \Tr_{S_i}\Big[(\rho_{S_i} \otimes \mathbbm{1}_{M_i})C_{S_iM_i}\Big]\Bigg)\text{~\ref{set:untrusted_meas}}.
\end{align}
Inserting these expressions into Eq.~\eqref{eq:modified cq-state}, it can be rewritten as
\begin{align}
    &\rho_{\mathbf{K}E} = \frac{1}{2}\sum_{k=0}^1 \dyad{k} \otimes \left(\rho_E+(-1)^k \Tr_{S_r}\Big[\dyad{\Psi}_{S_rE}\prod_{i=1}^{n_r}\Tr_{M_i}\big[(\mathbbm{1}_{S_i} \otimes \dyad{0}_{M_i})C_{S_iM_i}\big]\Big]\right)\text{~\ref{set:untrusted_prep}},\nonumber\\
    &\rho_{\mathbf{K}E} = \frac{1}{2}\sum_{k=0}^1 \dyad{k} \otimes \left(\rho_E+(-1)^k \Tr_{M_r}\Big[\dyad{\Phi}_{M_rE}\prod_{i=1}^{n_r}\Tr_{S_i}\big[(\rho_{S_i} \otimes \mathbbm{1}_{M_i})C_{S_iM_i}\big]\Big]\right)\text{~\ref{set:untrusted_meas}}.
\end{align}
Note that the first term in expressions above gives precisely $\rho_{\mathbf{K}E}^{\text{ideal}} = u_{\mathbf{K}} \otimes \rho_{E}$, from which it is clear that
\begin{align}
    \left\lVert \rho_{\mathbf{K}E} - \rho_{\mathbf{K}E}^{\text{ideal}} \right\rVert_1 &= \frac{1}{2}\left\lVert \sum_{k=0}^1 (-1)^k \dyad{k} \otimes \Tr_{S_r}\Big[\dyad{\Psi}_{S_rE}\prod_{i=1}^{n_r}\Tr_{M_i}\big[(\mathbbm{1}_{S_i} \otimes \dyad{0}_{M_i})C_{S_iM_i}\big]\Big] \right\rVert_1 \nonumber\\
    &= \left\lVert\Tr_{S_r}\Big[\dyad{\Psi}_{S_rE}\prod_{i=1}^{n_r}\Tr_{M_i}\big[(\mathbbm{1}_{S_i} \otimes \dyad{0}_{M_i})C_{S_iM_i}\big]\Big]\right\rVert_1\text{~\ref{set:untrusted_prep}}, \nonumber\\
    \left\lVert \rho_{\mathbf{K}E} - \rho_{\mathbf{K}E}^{\text{ideal}} \right\rVert_1 &= \left\lVert\Tr_{M_r}\Big[\dyad{\Phi}_{M_rE}\prod_{i=1}^{n_r}\Tr_{S_i}\big[(\rho_{S_i} \otimes \mathbbm{1}_{M_i})C_{S_iM_i}\big]\Big]\right\rVert_1\text{~\ref{set:untrusted_meas}}.
\end{align}

The terms inside the trace-norms are operators acting on $E$. Following the variational characterization of the trace-norm, there exist an Hermitian operatos $H$, $\tilde{H}$ acting on $E$ with spectrum $\pm 1$ such that
\begin{align}
    &\left\lVert\Tr_{S_r}\Big[\dyad{\Psi}_{S_rE}\prod_{i=1}^{n_r}\Tr_{M_i}\big[(\mathbbm{1}_{S_i} \otimes \dyad{0}_{M_i})C_{S_iM_i}\big]\Big]\right\rVert_1  \nonumber\\
    &= \Tr_{E}\Bigg[\Tr_{S_r}\Big[\dyad{\Psi}_{S_rE}\prod_{i=1}^{n_r}\Tr_{M_i}\big[(\mathbbm{1}_{S_i} \otimes \dyad{0}_{M_i})C_{S_iM_i}\big]\Big]H_ E\Bigg] \nonumber\\
    &= \Tr_{E}\Bigg[\Tr_{S_r}\Big[\dyad{\Psi}_{S_rE}\prod_{i=1}^{n_r}\Tr_{M_i}\big[(\mathbbm{1}_{S_i} \otimes \dyad{0}_{M_i})C_{S_iM_i}\big]\Big]\big(H_E^+ - H_E^-\big)\Bigg]\text{~\ref{set:untrusted_prep}}, \nonumber\\
    &\left\lVert\Tr_{M_r}\Big[\dyad{\Phi}_{M_rE}\prod_{i=1}^{n_r}\Tr_{S_i}\big[(\rho_{S_i} \otimes \mathbbm{1}_{M_i})C_{S_iM_i}\big]\Big]\right\rVert_1  \nonumber\\
    &= \Tr_{E}\Bigg[\Tr_{M_r}\Big[\dyad{\Phi}_{M_rE}\prod_{i=1}^{n_r}\Tr_{S_i}\big[(\rho_{S_i} \otimes \mathbbm{1}_{M_i})C_{S_iM_i}\big]\Big]\big(\tilde{H}_E^+ - \tilde{H}_E^-\big)\Bigg]\text{~\ref{set:untrusted_meas}},
\end{align}
where $H_E^\pm$ are the projectors corresponding to the spectral decomposition of $H_E$, and similarly for $\tilde{H}_E^\pm$ and $\tilde{H}_E$. Using the formulation of Thm.~\ref{thm: ancilla operator inequality} in Setts.~\ref{set:untrusted_prep} and~\ref{set:untrusted_meas}, and using~\cite[Lemma 6]{Foreman_2025}, it follows that
\begin{align}
    &\pm \left(\prod_{i=1}^{n_r}\Tr_{M_i}\big[(\mathbbm{1}_{S_i} \otimes \dyad{0}_{M_i})C_{S_iM_i}\big]\otimes H_E^{\pm}\right) \preceq \prod_{i=1}^{n_r} G_{S_i} \otimes H_E^\pm\text{~\ref{set:untrusted_prep}}, \nonumber\\
    &\pm \left(\prod_{i=1}^{n_r}\Tr_{S_i}\big[(\rho_{S_i} \otimes \mathbbm{1}_{M_i})C_{S_iM_i}\big]\otimes \tilde{H}_E^{\pm}\right) \preceq \prod_{i=1}^{n_r} G_{M_i} \otimes \tilde{H}_E^\pm\text{~\ref{set:untrusted_meas}},
\end{align}
and therefore
\begin{align}
    \left\lVert \rho_{\mathbf{K}E} - \rho_{\mathbf{K}E}^{\text{ideal}} \right\rVert_1 &\leq \Tr_{E} \Bigg[\Tr_{S_r}\Big[\dyad{\Psi}_{S_rE} \prod_{i=1}^{n_r}G_{S_i} \otimes \big(H_E^+ + H_E^-\big)\Big]\Bigg] \nonumber\\
    &= \Tr_{S_r}\Bigg[\Tr_{E}\Big[\dyad{\Psi}_{S_rE}\Big]\prod_{i=1}^{n_r}G_{S_i}\Bigg] = \Tr\Bigg[\rho_{S_r}\prod_{i=1}^{n_r}G_{S_i}\Bigg]\text{~\ref{set:untrusted_prep}}, \nonumber\\
    \left\lVert \rho_{\mathbf{K}E} - \rho_{\mathbf{K}E}^{\text{ideal}} \right\rVert_1 &\leq \Tr\Bigg[\sigma_{M_r}\prod_{i=1}^{n_r}G_{M_i}\Bigg]\text{~\ref{set:untrusted_meas}},
\end{align}
where we have used that $H_E^+ + H_E^- = \tilde{H}_E^+ + \tilde{H}_E^- = \mathbbm{1}_E$, $\rho_{S_r} = \Tr_E\Big[\dyad{\Psi}_{S_rE}\Big]$, and $\sigma_{M_r} = \Tr_E\Big[\dyad{\Phi}_{M_rE}\Big]$. As mentioned before, the proof for the case in Sett.~\ref{set:SDI} follows directly from the derivations for Sett.~\ref{set:untrusted_meas}.

\section{Proof of Thm.~\ref{thm: multi-bit extraction security proof}}\label{app: proof of multi-bit extraction security proof}

The initial cq-state for an $m-$bit extractor $g:\{0,1\}^{n_r} \rightarrow \{0,1\}^m$ is given by
\begin{equation}\label{eq:cq-state m-bit extractor}
   \rho_{\mathbf{K}E} = \sum_{\mathbf{k}\in\{0,1\}^m}\sum_{\mathbf{a}\in\{0,1\}^{n_r}}\dyad{\mathbf{k}} \otimes \delta_{\mathbf{k},g(\mathbf{a})}\Tr_{S_rM_r}\Bigg[\rho_{S_rM_rE}\prod_{i=1}^{n_r}\Pi_{S_iM_i}^{a_i}\Bigg],
\end{equation}
where $\rho_{S_rM_rE}$ is defined in Eq.~\eqref{eq:initial state} for Setts.~\ref{set:untrusted_prep}-\ref{set:SDI}.

Analogously to the proof of Thm.~\ref{thm: single-bit extraction security proof}, we define the operators $C_{S_iM_i}$. Using Eq.~\eqref{eq:inverse relationship} and expanding the product across rounds, we have that
\begin{equation}\label{eq:expanded product}
   \prod_{i=1}^{n_r} \Pi_{S_iM_i}^{a_i} = 2^{-n_r} \sum_{\mathbf{r}\in\{0,1\}^{n_r}} \prod_{i=1}^{n_r} (-1)^{a_i\: r_i}(C_{S_iM_i})^{r_i},
\end{equation}
noting that $C_{S_iM_i}$ is full-rank. Substituting Eq.~\eqref{eq:expanded product} in Eq.~\eqref{eq:cq-state m-bit extractor}, we obtain
\begin{align}
   \rho_{\mathbf{K}E} &= \sum_{\mathbf{k}}\sum_{\mathbf{a}}\dyad{\mathbf{k}} \otimes \delta_{\mathbf{k},g(\mathbf{a})}\Tr_{S_rM_r}\Bigg[\rho_{S_rM_rE}\: 2^{-n_r}\sum_{\mathbf{r}}\prod_{i=1}^{n_r}(-1)^{a_i\:r_i}(C_{S_iM_i})^{r_i}\Bigg] \nonumber\\
   &= 2^{-n_r}\sum_{\mathbf{k}} \dyad{\mathbf{k}} \otimes \sum_{\mathbf{r}} \Bigg(\sum_{\mathbf{a}} \delta_{\mathbf{k},g(\mathbf{a})}(-1)^{\mathbf{a}\cdot\mathbf{r}}\Bigg)\Tr_{S_rM_r}\Bigg[\rho_{S_rM_rE}\prod_{i=1}^{n_r}(C_{S_iM_i})^{r_i}\Bigg].
\end{align}

The trace distance with the ideal state $\rho_{\mathbf{K}E}^{\text{ideal}}$ is then
\begin{equation}
    \left\lVert \rho_{\mathbf{K}E} - \rho_{\mathbf{K}E}^{\text{ideal}}\right\rVert_1 = 2^{-n_r}\sum_{\mathbf{k}} \left\lVert \sum_{\mathbf{r}} \Bigg(\sum_{\mathbf{a}} \left(\delta_{\mathbf{k},g(\mathbf{a})} - 2^{-m}\right)(-1)^{\mathbf{a}\cdot\mathbf{r}}\Bigg)\Tr_{S_rM_r}\Bigg[\rho_{S_rM_rE}\prod_{i=1}^{n_r}(C_{S_iM_i})^{r_i}\Bigg]\right\rVert_1.
\end{equation}
Using the triangle inequality and the property in Eq.~\eqref{eq:property extractors}, it follows that
\begin{align}
    \left\lVert \rho_{\mathbf{K}E} - \rho_{\mathbf{K}E}^{\text{ideal}}\right\rVert_1 &\leq 2^{-n_r}\sum_{\mathbf{k}}  \sum_{\mathbf{r}} \Bigg|\Bigg(\sum_{\mathbf{a}} \left(\delta_{\mathbf{k},g(\mathbf{a})} - 2^{-m}\right)(-1)^{\mathbf{a}\cdot\mathbf{r}}\Bigg)\Bigg|\quad  \left\lVert\Tr_{S_rM_r}\Bigg[\rho_{S_rM_rE}\prod_{i=1}^{n_r}(C_{S_iM_i})^{r_i}\Bigg]\right\rVert_1\nonumber\\
    &\leq 2^{-n_r} \sum_{\mathbf{k}}\sum_{\mathbf{r}} n_r^2 \sqrt{2}^{n_r - m}\left\lVert\Tr_{S_rM_r}\Bigg[\rho_{S_rM_rE}\prod_{i=1}^{n_r}(C_{S_iM_i})^{r_i}\Bigg]\right\rVert_1\nonumber\\
    &= n_r^2 \sqrt{2}^{m-n_r} \sum_{\mathbf{r}}\left\lVert\Tr_{S_rM_r}\Bigg[\rho_{S_rM_rE}\prod_{i=1}^{n_r}(C_{S_iM_i})^{r_i}\Bigg]\right\rVert_1. \label{eq:expression after triangle ineq}
\end{align}

Following similar steps as in the proof of Thm.~\ref{thm: single-bit extraction security proof}, we substitute $\rho_{S_rM_rE}$ by the expressions in Eq.~\eqref{eq:initial state} for the different settings, introduce the projectors $H^{\pm}_{E}$, and use Thm.~\ref{thm: ancilla operator inequality} and~\cite[Lemma 6]{Foreman_2025} to finally have that
\begin{align}
    \left\lVert \rho_{\mathbf{K}E} - \rho_{\mathbf{K}E}^{\text{ideal}}\right\rVert_1 &\leq n_r^2 \sqrt{2}^{m-n_r} \sum_{\mathbf{r}} \left(\Tr_{S_rE}\Bigg[\dyad{\Psi}_{S_rE}\prod_{i=1}^{n_r}\Tr_{M_i}\Big[\left(\mathbbm{1}_{S_i} \otimes \dyad{0}_{M_i}\right)(C_{S_iM_i})^{r_i}\Big]\otimes (H_E^+-H_E^-)\Bigg]\right) \nonumber\\
    &\leq n_r^2 \sqrt{2}^{m-n_r} \sum_{\mathbf{r}} \Tr\Bigg[\rho_{S_r}\prod_{i=1}^{n_r}(G_{S_i})^{r_i}\Bigg] = n_r^2 \sqrt{2}^{m-n_r} \Tr\Bigg[\rho_{S_r}\prod_{i=1}^{n_r}(\mathbbm{1}_{S_i} + G_{S_i})\Bigg],
\end{align}
where we only show the derivation for Sett.~\ref{set:untrusted_prep}, but the results for Sett.~\ref{set:untrusted_meas} and~\ref{set:SDI} are obtained analogously.

\section{Proof of Thm.~\ref{thm:spot-checking protocol security}}\label{app:theorem spot-checking}

In each estimation round, the operator $Q_{M_j}^{a_j,x_j}$ is applied to the internal system of the measurement device, given by
\begin{equation}
    Q_{M_j}^{a_j,x_j} = \frac{1}{2} \Tr_{S_j}\Bigg[\left(\rho_{S_j}^{x_j} \otimes \mathbbm{1}_{M_j}\right) \Pi_{S_jM_j}^{a_j}\Bigg],
\end{equation}
and it can be verified that
\begin{equation}\label{eq:identities Q}
    \sum\limits_{a_j,x_j=0}^1 Q_{M_j}^{a_j,x_j} = \mathbbm{1}_{M_j}, \qquad
     G_{M_i} = \sum\limits_{a_i,x_i=0}^1 (4\nu_{a_i,x_i} - 1)Q_{M_i}^{a_i,x_i}.
\end{equation}

The labels $\mathbf{t}$ can be assumed to be all generated before starting the protocol, so the rounds can be reordered without loss of generality. The probability of obtaining estimation data $\mathbf{z}$ given labels $\mathbf{t}$ is
\begin{equation}
    p(\mathbf{z}|\mathbf{t}) = \Tr\Bigg[\sigma_{M_e | \mathbf{t}}\prod_{j=1}^{n_e}Q_{M_j}^{a_j,x_j}\Bigg].
\end{equation}

The internal state across raw key generation rounds conditioned on a particular estimation data $\mathbf{z}$ is given by
\begin{equation}\label{eq:raw key state}
    \sigma_{M_r|\mathbf{t},\mathbf{z}} = \frac{1}{p(\mathbf{z}|\mathbf{t})} \Tr_{M_e}\Bigg[\sigma_{M_eM_r|\mathbf{t}}\prod_{j=1}^{n_e}Q_{M_j}^{a_j,x_j}\Bigg].
\end{equation}

Next, we use the results from Sec.~\ref{sec:extractors} to bound the trace distance for any choice of $\mathbf{t}$ and $\mathbf{z}$. First, for the \texttt{XOR} extractor, from Thm.~\ref{thm: single-bit extraction security proof} we have that
\begin{equation}
    \left\lVert \rho_{\mathbf{K}E|\mathbf{t},\mathbf{z}} - \rho_{\mathbf{K}E|\mathbf{t},\mathbf{z}}^{\text{ideal}} \right\rVert_1 \leq \Tr\Bigg[\sigma_{M_r|\mathbf{t},\mathbf{z}}\prod_{i=1}^{n_r}G_{M_i}\Bigg].
\end{equation}
Inserting Eq.~\eqref{eq:raw key state} into the above inequality, we obtain
\begin{equation}
   \left\lVert \rho_{\mathbf{K}E|\mathbf{t},\mathbf{z}} - \rho_{\mathbf{K}E|\mathbf{t},\mathbf{z}}^{\text{ideal}} \right\rVert_1 \leq \frac{1}{p(\mathbf{z}|\mathbf{t})}\Tr\Bigg[\sigma_{M_eM_r|\mathbf{t}}\prod_{j=1}^{n_e}Q_{M_j}^{a_j,x_j}\prod_{i=1}^{n_r}G_{M_i}\Bigg],
\end{equation}
and using that $p(\mathbf{t},\mathbf{z}) = p(\mathbf{z}|\mathbf{t})p(\mathbf{t})$ and $p(\mathbf{t}) =p_e^{n_e} p_r^{n_r}$, it follows that
\begin{equation}
    \sum_{\mathbf{t},\mathbf{z}} p(\mathbf{t},\mathbf{z})\left\lVert \rho_{\mathbf{K}E|\mathbf{t},\mathbf{z}} - \rho_{\mathbf{K}E|\mathbf{t},\mathbf{z}}^{\text{ideal}} \right\rVert_1 \leq \sum_{\mathbf{t},\mathbf{z}} \Tr\Bigg[\sigma_{M_eM_r|\mathbf{t}}\prod_{j=1}^{n_e}\Big(p_eQ_{M_j}^{a_j,x_j}\Big)\prod_{i=1}^{n_r}\Big(p_rG_{M_i}\Big)\Bigg].
\end{equation}

Note that when $m(\mathbf{t},\mathbf{z}) = 0$, no output state is produced, and the trace distance is trivially $0$. Therefore, the inequality above can be rewritten as
\begin{align}
    &\sum_{\mathbf{t},\mathbf{z}} p(\mathbf{t},\mathbf{z})\left\lVert \rho_{\mathbf{K}E|\mathbf{t},\mathbf{z}} - \rho_{\mathbf{K}E|\mathbf{t},\mathbf{z}}^{\text{ideal}} \right\rVert_1 \leq \sum_{\mathbf{t},\mathbf{z}} m(\mathbf{t},\mathbf{z})\Tr\Bigg[\sigma_{M_eM_r|\mathbf{t}}\prod_{j=1}^{n_e}\Big(p_eQ_{M_j}^{a_j,x_j}\Big)\prod_{i=1}^{n_r}\Big(p_rG_{M_i}\Big)\Bigg] \nonumber \\
    &\leq \sum_{\mathbf{t},\mathbf{z}} \sqrt{2}^{\sum_{j=1}^{n_e} \alpha_{a_j,x_j} + (\beta - 1)n_r - 2\log(1/\epsilon)}\Tr\Bigg[\sigma_{M_eM_r|\mathbf{t}}\prod_{j=1}^{n_e}\Big(p_eQ_{M_j}^{a_j,x_j}\Big)\prod_{i=1}^{n_r}\Big(p_rG_{M_i}\Big)\Bigg], \label{eq:bound on error}
\end{align}
where $\{\alpha_{a,x}\}_{a,x=0}^1$ and $\beta$ are the optimal values that maximize Eq.~\eqref{eq:length single-bit extractor}, subject to the constraints in Eq.~\eqref{eq:constraints single-bit extractor length}. The bound in Eq.~\eqref{eq:bound on error} follows from the fact that when
$m(\mathbf{t},\mathbf{z}) = 1$, we have
\begin{equation}
    m(\mathbf{t},\mathbf{z}) \leq \sqrt{2}^{\sum_{j=1}^{n_e} \alpha_{a_j,x_j} + (\beta - 1)n_r - 2\log(1/\epsilon)}.
\end{equation}

Performing the summation over $\mathbf{z}$, Eq.~\eqref{eq:bound on error} becomes
\begin{equation}
    \epsilon\sum_{\mathbf{t}} \Tr\Bigg[\sigma_{M_eM_r|\mathbf{t}}\prod_{j=1}^{n_e}\Big(p_e\sum_{a_j,x_j=0}^1 \sqrt{2}^{\alpha_{a_j,x_j}}Q_{M_j}^{a_j,x_j}\Big)\prod_{i=1}^{n_r}\Big(p_r\sqrt{2}^{\beta - 1} G_{M_i}\Big)\Bigg].
\end{equation}
Using the second identity in Eq.~\eqref{eq:identities Q}, we can write
\begin{equation}
    \epsilon\sum_{\mathbf{t}} \Tr\Bigg[\sigma_{M_eM_r|\mathbf{t}}\prod_{j=1}^{n_e}\Big(p_e\sum_{a_j,x_j=0}^1 \sqrt{2}^{\alpha_{a_j,x_j}}Q_{M_j}^{a_j,x_j}\Big)\prod_{i=1}^{n_r}\Big(p_r\sqrt{2}^{\beta - 1} \sum\limits_{a_i,x_i=0}^1 (4\nu_{a_i,x_i} - 1)Q_{M_i}^{a_i,x_i}\Big)\Bigg].
\end{equation}
Performing the summation over $\mathbf{t}$, it simplifies to
\begin{equation}\label{eq:expanded term}
\epsilon\Tr\Bigg[\sigma_{M_g}\prod_{k=1}^{n}\sum_{a_k,x_k=0}^1 \Big(p_e\sqrt{2}^{\alpha_{a_k,x_k}} + p_r\sqrt{2}^{\beta - 1}(4\nu_{a_k,x_k} - 1)\Big)Q_{M_k}^{a_k,x_k}\Bigg].
\end{equation}
Using the constraints in Eq.~\eqref{eq:constraints single-bit extractor length} and the first identity in Eq.~\eqref{eq:identities Q}, we have that
\begin{equation}
    \sum_{a_k,x_k=0}^1 \Big(p_e\sqrt{2}^{\alpha_{a_k,x_k}} + p_r\sqrt{2}^{\beta - 1}(4\nu_{a_k,x_k} - 1)\Big)Q_{M_k}^{a_k,x_k} = \sum_{a_k,x_k=0}^1 Q_{M_k}^{a_k,x_k} = \mathbbm{1}_{M_k}.
\end{equation}
Inserting this into Eq.~\eqref{eq:bound on error} and using that $\Tr[\sigma_{M_g}] = 1$, we finally have
\begin{equation}
    \sum_{\mathbf{t},\mathbf{z}} p(\mathbf{t},\mathbf{z})\left\lVert \rho_{\mathbf{K}E|\mathbf{t},\mathbf{z}} - \rho_{\mathbf{K}E|\mathbf{t},\mathbf{z}}^{\text{ideal}} \right\rVert_1 \leq \epsilon.
\end{equation}

In the case of the $m$-bit extractors, we introduce an indicator function in order to encode the case where no output is produced, given by
\begin{equation}
    I(m(\mathbf{t},\mathbf{z})) = \begin{cases}
        1 \quad \text{if} \quad m(\mathbf{t},\mathbf{z}) > 0,\\
        0 \quad \text{otherwise}.
    \end{cases}
\end{equation}
Using Thm.~\ref{thm: multi-bit extraction security proof}, we have that
\begin{align}
    &\sum_{\mathbf{t}}\sum_{\mathbf{z}} p(\mathbf{t},\mathbf{z}) \left\lVert \rho_{\mathbf{K}E|\mathbf{t},\mathbf{z}} - \rho_{\mathbf{K}E|\mathbf{t},\mathbf{z}}^{\text{ideal}} \right\rVert_1 \nonumber \\
    &\leq \sum_{\mathbf{t},\mathbf{z}} p(\mathbf{t},\mathbf{z}) I(m(\mathbf{t},\mathbf{z})) \sqrt{2}^{m(\mathbf{t},\mathbf{z}) - n_r + 4\log{n_r}} \Tr\Bigg[\sigma_{M_r|\mathbf{t},\mathbf{z}} \prod_{i=1}^{n_r}\left(\mathbbm{1}_{M_i} + G_{M_i}\right)\Bigg] \nonumber \\
    &\leq \sum_{\mathbf{t},\mathbf{z}} p(\mathbf{t},\mathbf{z}) \sqrt{2}^{m(\mathbf{t},\mathbf{z}) - n_r + 4\log{n_r}} \Tr\Bigg[\sigma_{M_r|\mathbf{t},\mathbf{z}} \prod_{i=1}^{n_r}\left(\mathbbm{1}_{M_i} + G_{M_i}\right)\Bigg]. \label{eq:mbit extractor theorem 4}
\end{align}

Substituting Eq.~\eqref{eq:length m-bit extractors} for the output length, Eq.~\eqref{eq:raw key state} for the internal state of the measurement device conditioned on the estimation data, and $p(\mathbf{t},\mathbf{z}) = p(\mathbf{z}|\mathbf{t}) p_e^{n_e}p_r^{n_r}$, the expression above can be written as
\begin{equation}
    \epsilon \sum_{\mathbf{t},\mathbf{z}} p_e^{n_e}p_r^{n_r} \sqrt{2}^{\sum_{j=1}^{n_e} \alpha_{a_j,x_j} + (\beta-1)n_r} \Tr\Bigg[\sigma_{M_eM_r|\mathbf{t}} \prod_{j=1}^{n_e}Q_{M_j}^{a_j,x_j}\prod_{i=1}^{n_r}\left(\mathbbm{1}_{M_i} + G_{M_i}\right)\Bigg],
\end{equation}
which, when performing the summation over $\mathbf{z}$, becomes
\begin{equation}
    \epsilon \sum_{\mathbf{t}} \Tr\Bigg[\sigma_{M_eM_r|\mathbf{t}} \prod_{j=1}^{n_e}\Big(p_e\sum_{a_j,x_j=0}^1 \sqrt{2}^{\alpha_{a_j,x_j}}Q_{M_j}^{a_j,x_j}\Big)\prod_{i=1}^{n_r} \Big(p_r\sqrt{2}^{\beta-1}(\mathbbm{1}_{M_i} + G_{M_i})\Big)\Bigg].
\end{equation}
Using the second identity in Eq.~\eqref{eq:identities Q} and performing the summation over $\mathbf{t}$, it reduces to
\begin{equation}
    \epsilon \Tr\Bigg[\sigma_{M_g} \prod_{k=1}^{n}\sum_{a_k,x_k=0}^1\Big(p_e \sqrt{2}^{\alpha_{a_k,x_k}} + 4p_r\sqrt{2}^{\beta-1}\nu_{a_k,x_k}\Big)Q_{M_k}^{a_k,x_k}\Bigg].
\end{equation}
Using the constraints in Eq.~\eqref{eq:constraints m-bit extractor length} and the first identity in Eq.~\eqref{eq:identities Q}, we have that
\begin{equation}
    \sum_{a_k,x_k=0}^1 \Big(p_e\sqrt{2}^{\alpha_{a_k,x_k}} + 4p_r\sqrt{2}^{\beta - 1}\nu_{a_k,x_k}\Big)Q_{M_k}^{a_k,x_k} = \sum_{a_k,x_k=0}^1 Q_{M_k}^{a_k,x_k} = \mathbbm{1}_{M_k}.
\end{equation}
Inserting this into Eq.~\eqref{eq:mbit extractor theorem 4}, we finally have
\begin{equation}
    \sum_{\mathbf{t},\mathbf{z}} p(\mathbf{t},\mathbf{z})\left\lVert \rho_{\mathbf{K}E|\mathbf{t},\mathbf{z}} - \rho_{\mathbf{K}E|\mathbf{t},\mathbf{z}}^{\text{ideal}} \right\rVert_1 \leq \epsilon.
\end{equation}

\end{document}